\newcommand\tab[1][0.5cm]{\hspace*{#1}}
\renewcommand{\emptyset}{\varnothing}
\newtheorem{theorem}{Theorem}
\newtheorem{lemma}{Lemma}
\newtheorem{proposition}{Proposition}
\newtheorem{definition}{Definition}
\newtheorem{example}{Example}
\newcommand{\cmark}{\checkmark}
\newcommand{\xmark}{$-$}
\newcommand{\DMC}{\textsf{DMC}\xspace}
\newcommand{\SV}{\textsf{SV}\xspace}
\newcommand{\RFC}{\textsf{RFC}\xspace}
\newcommand{\eRFC}{\textsf{eRFC}\xspace}
\newcommand{\MES}{\textsf{MES}\xspace}
\newcommand{\NDMES}{\textsf{NDMES}\xspace}
\newcommand{\ULMES}{\textsf{ULMES}\xspace}
\newcommand{\DULMES}{\textsf{eULMES}\xspace}
\newcommand{\ShVa}{\operatorname{SV}}
\newcommand{\MC}{\operatorname{MC}}
\title{Participation Incentives in Online Cooperative Games}
\begin{document}

\author[1]{Haris Aziz}
\author[1]{Yuhang Guo}
\author[2,3]{Zhaohong Sun}
\affil[1]{University of New South Wales}
\affil[2]{Kyushu University}
\affil[3]{CyberAgent}
\date{}

\maketitle

\begin{abstract}
		This paper studies cooperative games where coalitions are formed online and the value generated by the grand coalition must be irrevocably distributed among the players at each timestep. We investigate the fundamental issue of strategic pariticipation incentives and address these concerns by formalizing natural participation incentive axioms. Our analysis reveals that existing value-sharing mechanisms fail to meet these criteria. Consequently, we propose several new mechanisms that not only fulfill these desirable participation incentive axioms but also satisfy the early arrival incentive for general valuation functions. Additionally, we refine our mechanisms under superadditive valuations to ensure individual rationality while preserving the previously established axioms.
\end{abstract}

\section{Introduction}\label{intro}
In cooperative games, players often share common interests in achieving a specific goal. A key objective is to determine how to fairly allocate the overall value to each participant, ensuring that all players identify and establish common sense for the cooperation. Canonical cooperative games focus on how to distribute the value after the grand coalition has formed (see, e.g., \citep{Curi97a, PeSu07a, BDT08a}). However, in practice, players may join the coalition sequentially, and the value generated needs to be distributed irrevocably before the grand coalition is fully formed. For example, individuals with different roles might join a startup to contribute and collectively generate greater value. It is often impractical for all participants to join simultaneously and wait until everyone has arrived before sharing the value, and sometimes it is unclear whether everyone has joined. Some of these concerns were anticipated in the foundational work of game theory by \citet{vNM44a}. Therefore, the game is more accurately modeled as an online scenario, where each newly arriving player contributes new value, which is then distributed among the players who have already arrived based on the current coalition.

A formal model for online cooperative games was first formally proposed by \citet{GZZ+24a}.
They consider a range of axioms, including incentives for existing players not to leave, incentives for players to join the coalition as early as possible, and a fair value-sharing property called \textit{Shapley-Fairness}. They demonstrate that the classical Shapley value, when applied to the online setting, has a fundamental flaw: a player may have an incentive to leave the coalition as its total reward decreases over time. To deal with it, they consider a simple rule called \textit{Distributing Marginal Contribution (\DMC)}, where the new player receives all the marginal value created. However, the \DMC rule has a serious limitation as well: players may have an incentive to join later rather earlier. To address these issues, they propose a new mechanism called the \textit{Rewarding First Critical Player (\RFC)} rule, which rewards the first player who is critical in a specific sense.

In this paper, we highlight that the rules devised for the new setting of online cooperative games have significant drawbacks concerning various participation incentive axioms. One major aspect that has been overlooked in previous works is providing incentives for players to participate. Notice that for the \RFC rule, if a new player or stakeholder joins the coalition and provides a marginal contribution but receives no benefit after her arrival, it could discourage players from participating. We capture this participation incentive and formalize it as an axiom called ``\textit{Incentive for Participation}" (PART): for each new joining player, if she creates some new marginal value for the current grand coalition, then she must gain positive instant benefit. The second property we look into is called ``\textit{Strong Incentive to STAY}" (S-STAY). S-STAY refines the online individual rationality axiom proposed by \citet{GZZ+24a} which requires that each player's shared value is non-decreasing. The intuition is that for players who have already joined, if they create new marginal value with new arriving player, jointly, they should also share some value from the new created value.  The third aspect is a classical participation incentive axiom known as \textit{Individual Rationality} (IR). According to IR, if a new player joins the coalition but cannot secure a share of the value greater than what she could achieve on her own, she may be discouraged from participating in the coalition. Some other concepts regarding fairness of value-sharing are discussed in this paper later as well. The example below illustrates some of our concerns which motivates the design of new rules. 

\emph{``Three students, Alice, Bob, and Carl, are eager to form a team to tackle a project. The journey begins with Alice, who initiates the work and creates an initial value of $1$. Soon after, Bob joins the effort, and together they elevate the project’s worth to $3$. Finally, Carl comes on board, and the trio collectively completes the project, achieving a total value of $5$. Individually, each student can generate a value of $1$, while any pair working together can produce a value of $3$. As new members join, it’s crucial to allocate the total current value in real-time. This ensures that every team member stays fully motivated and is encouraged to join the project as early as possible."} 

To meet the individual rationality criterion, each student should receive a reward of at least $1$. For the participation axiom in an online setting, each student must be rewarded immediately upon joining to encourage their participation. Notably, Alice contributes to the new marginal value both when Bob arrives and when Carl arrives. To provide a strong incentive for Alice to stay in the coalition, the value that Alice receives should be strictly increasing. From the lecturer’s perspective, who wants three students to complete the project as soon as possible, the value-sharing rule should also incentivize them to join as early as possible. In this paper, we examine the following questions.

\begin{quote}
 \textit{For the online coopertive game setting, what are the key participation incentive properties? How do the existing rules fare with respect to these properties? Can we design new rules that perform even better with respect to participation properties?}
\end{quote}

\paragraph{Our Contribution} 
In this paper, we study various participation incentives in online cooperative games. We first propose new incentive axioms specifically for this setting, which represent our primary conceptual contribution. We then demonstrate the limitations of existing value-sharing rules, which fail to incentivize participation and do not satisfy many of the axioms we identify. This motivates us to devise value-sharing rules that fulfill all the participation incentives. Notably, key existing rules fail to meet the early arrival property. In contrast, our newly proposed rules satisfy the early arrival property for games with general valuations. 

We further investigate the impossibility of achieving individual rationality in games with general valuations and modify our newly proposed rules to satisfy individual rationality in online cooperative games with superadditive valuations. Additionally, our modified rules retain all the properties of the original rules for general valuation games.

\Cref{table:summary-online-value-sharing-rules} summarizes all the existing rules and our newly proposed rules with regard to their satisfaction of the various axioms. It demonstrates that our new rules offer a distinct advantage over existing rules in terms of key participation incentive properties.

\begin{table}[h!]
	\centering
	\scalebox{1}{%
		\begin{tabular}{ccccccccc}
			\toprule 
		Rules	&\textbf{IR}     &\textbf{PART}     &\textbf{EA} &\textbf{STAY}   &\textbf{S-STAY}  &\textbf{OD} & \textbf{SF} & \textbf{Poly-time}\\
			\midrule
			\DMC     &\cmark &\cmark   &\xmark &\cmark   &\xmark   &\cmark  &\cmark & \cmark \\
			\midrule
			\SV      &\cmark &\cmark   &\cmark  &\xmark  &\xmark   &\cmark  &\cmark & \cmark \\
			\midrule
			\eRFC     &\xmark &\xmark   &\xmark &\cmark  &\xmark    &\cmark &\cmark  &\xmark$^\dagger$ \\
			\midrule
			\textbf{  \MES}     &\xmark &\cmark   &\cmark &\cmark   &\cmark  &\xmark & \xmark &\cmark \\
			\midrule
			\textbf{\NDMES}     &\xmark &\cmark   &\cmark &\cmark  &\cmark   &\cmark &\xmark  &\xmark$^\ddagger$ \\
			\midrule
			\textbf{\ULMES}     &\xmark &\cmark  &\xmark  &\cmark  &\cmark  &\cmark  &\xmark &\cmark \\
			\midrule
			\textbf{\DULMES }   &\xmark &\cmark   &\cmark &\cmark  &\cmark    &\cmark  &\xmark &\xmark$^\dagger$ \\
			\midrule
			\textbf{IR-\DULMES} &\cmark &\cmark   &\cmark &\cmark  &\cmark  &\cmark  &\xmark &\xmark$^\dagger$ \\
			\bottomrule
		\end{tabular}
	}
	\caption{Summary of results: we mark the satisfication of all the axioms for each rule in general valuation domain, except for IR (only discussed within superadditive valuation). The rules in bold font are new rules presented in this paper. $^\dagger$: Poly-time in simple online cooperative games while $^\ddagger$: Poly-time with subadditive valuation function.}
	\label{table:summary-online-value-sharing-rules}
\end{table}

\section{Related Work}

\paragraph{Cooperative Games}
Cooperative game theory, originating from the last century \citep{VOMO07a, SHAP53a, GILL59a}, is a significant branch of game theory that studies scenarios where players can benefit by forming coalitions and making collective decisions. One of the key problems in this area is how to distribute the value created by coalitions among players, considering axiomatic characterizations (e.g., stability, consistency, etc.). The Shapley Value \citep{SHAP53a} initiated the research, laying the foundation for a series of subsequent works. \citet{GILL59a} first proposed the core concept for cooperative games. In the context of transferable utility cooperative games, \citet{SHUB59a} studied market games, while \citet{AUMA64a} investigated cooperative bargaining scenarios. \citet{SCHM69a} first introduced the concept of the nucleolus, and \citet{ROSO92a} bridged cooperative game theory with practical matching markets. There is also a line of research focusing on cooperative games with hedonic preferences \citep{ALRE04a, DEK+12a, BOW08a, AZBR12a, AZSA16a}. Further details about classic cooperative game theory can be found in several books (see, e.g., \citep{BDT08a, CEW22a}).

Online cooperative games study the coalition game model in an online manner where agents arrive in a random order and the coalition formation decision should be made without any knowledge regarding the agents arriving in the future. Our paper is closely related to the work by \citet{GZZ+24a}, which was the first to study online cooperative games with consideration of strategic arrivals. Recently, \citet{ZZG+24a} explores the cost sharing game in the context of online strategic arrivals and propose the Shapley-fair shuffle cost sharing mechanisms. 
Another branch studying cooperative game in an online manner, mainly concerning on hedonic games, focuses on addresssing approximation to the social welfare and stability \citep{FMM+21a,BURO23a}. The biggest difference from the aformentioned online cooperative game is that it typically assume that agents reveal their preferences truthfully without incentive to misreport. Further literature on dynamic mechanism design can be found in \citep{BEVA19a, PARK07a}. With regard to cooperative games, \citet{FMM+21a} studied the online coalition structure generation problem, while \citet{BURO23a} investigated online coalition formation with random arrival. An online or dynamic perspective has also been applied to matching and hedonic games (see, e.g., \citep{Laur22, BBWa23a, BuRoR24}). 

\paragraph{Online Mechanism Design}
The online cooperative game model explored in this paper is closely related to online mechanism design, where each agent’s private information is their arrival time in the game. The primary objective is to design value-sharing rules that incentivize all agents to truthfully report their arrival times. Mechanism design in dynamic environments focuses on problems involving multiple agents with private information, where the goal is to elicit this private information while making decisions without knowledge of future events. \citet{LANI00a} initiated the study of truthful online auctions in dynamic environments. Later, \citet{FRPA03a} coined the concept of online mechanism design. Some works \citep{PASI03a, PYS04a} discussed the state-of-the-art VCG mechanism in dynamic online settings. Online matching has also been a hot topic in dynamic algorithm design \citep{KVV90a, KAPR00a, FMM09a, GKM+19a}. Moreover, there is a wide literature on solutions for different sequential decision problems \citep{PORT04a, HEBE06a, BOEL05a}.

\section{Preliminary}\label{prelim}

An online cooperative game (OCG) $G$ is a triple $(N,v,\pi)$, where $N=\{1,2,\dots, n\}$ is a set of players, $v: 2^N\to \mathbb{R}_{+}$ is the valuation function mapping a subset of players to a non-negative real number, and $\pi \in \Pi(N)$ is a permutation of $N$ representing the arrival order of all the players where $\Pi(N)$ denotes the set of all permutations. Given a subset $S \subseteq N$, $S$ creates a coalition with value $v(S)$. In this paper, we focus on normalized and monotone general valuation function:
\begin{itemize}
	\item \textit{Normalized}: $v(\emptyset)=0$;
	\item \textit{Monotone}: $S\subseteq T \subseteq N$, $v(T) \geq v(S)$.
\end{itemize} 

We then introduce several types of functions regarding the valuation set function. A valuation function $v:2^N \to \mathbb{R}$ is \textit{submodular} if for any $S, T$ s.t. $S\subseteq T \subseteq N$, we have $v(S) + v(T) \geq v(S\cup T) + v(S\cap T)$; A valuation function $v:2^N \to \mathbb{R}$ is \textit{subadditive} if for any $S,T$ s.t. $S \subseteq T \subseteq N$, $v(S) + v(T) \geq v(S\cup T)$; A valuation function $v:2^N \to \mathbb{R}$ is \textit{superadditive} if for any $S,T$ s.t. $S\cap T = \emptyset$, $v(S) + V(T) \leq v(S\cup T)$.

Given a permutation $\pi$, for any pair of players $i, j \in N$, let $i \prec_{\pi} j$ denote that player $i$ arrives earlier than player $j$, according to the permutation $\pi$. An \textit{online value-sharing rule} $\phi$ maps the game $G=(N,v,\pi)$ to an $n$-tuple $\phi(G) = (\phi(G, 1), \ldots, \phi(G, n))$, 
where $\phi(G, i)$ denotes the value assigned to agent $i$. For any agent $i\in N$, we assume $\phi(G, i) \geq 0$ and $\sum_{i \in N} \phi(G, i) = v(N)$.

We next introduce two significant definitions for our design of axioms and algorithms. 
The first definition is called ``prefix sub-game". Before introducing these concepts, we first define the prefix of any permutation $\pi=(\pi(1),\pi(2),\dots, \pi(n))$. For any coalition $S=\{\pi(1),\pi(2),\dots, \pi(|S|)\} \subseteq N$, we say $S$ is a prefix of $\pi$ and denote it as $S \sqsubseteq \pi$. A prefix sub-game $G^S$, which is defined over a set $S\subseteq N$. Given an OCG $G = (N, v, \pi)$ and a prefix $S \sqsubseteq \pi$, we define the \textit{prefix sub-game} for the first $|S|$ arriving players as $G^S = (S, v_{\mid S}, \pi_{\mid S})$. Here, $v_{\mid S}$ is the valuation function for any coalition $C \subseteq S$ and $\pi_{\mid S}=(\pi(1),\pi(2), \dots, \pi(|S|))$. For any arriving agent $i$, we define $N_{\pi \mid i}=\{\pi(1),\pi(2), \dots, i\}$ as the prefix agents set (including agent $i$).

We then define a \textit{Simple Online Cooperative Game} (SOCG) with a restrictive 0-1 valuation function. In an SOCG, there is a \textit{pivotal} player such that, upon her arrival, the value of the coalition increases from $0$ to $1$. Given that the valuation function is monotone, the grand coalition value remains at $1$ after all players have arrived, i.e., $v(N) = 1$.
Formally, an OCG $G=(N,v,\pi)$ is an SOCG if the valuation function $v(\cdot)$ satisfies: $\forall\, S \subseteq N, v(S)\in \{0,1\}$ and $v(\emptyset)=0, v(N)=1$.

After the definition of games, we next revisit some existing axioms introduced by \citet{GZZ+24a}. The first property is called \textit{Incentive to Stay}. 
\footnote{
	In the original paper \citep{GZZ+24a}, this property is referred to as Online Individual Rationality (OIR). However, it differs from the classic notion of individual rationality. As discussed in Section~\ref{sec:Participation_Axioms}, we revisit the axiom of individual rationality (IR) that aligns with the classic notion.}

\begin{definition}[Incentive to Stay (STAY)]
	\label{def:STAY}
	An online value-sharing rule $\phi_G$ satisfies \emph{incentive for stay (STAY}) if given an OCG $G=(N,v,\pi)$, for any two prefix sub-games $G^S=(S,v_{\mid S}, \pi_{\mid S})$ and $G^T=(T,v_{\mid T}, \pi_{\mid T})$, where $T \subseteq S$, and $T,S \sqsubseteq \pi$, every player $q \in T$ satisfies $\phi(G^T,q) \leq \phi(G^S,q)$.
\end{definition}

STAY guarantees that each player’s shared value is non-decreasing as more players arrive. This encourages the arrived players staying in the grand coalition for more rewards, which can be viewed as a participation incentive property.
Next, we revisit another participation incentive axiom called \textit{Incentive for Early Arrival} proposed by \citet{GZZ+24a}.

\begin{definition}[Incentive for Early Arrival (EA)]
	\label{def:EA}
	An online value-sharing rule $\phi(G)$ incentivizes early arrival (EA) if, for any two OCGs $G = (N, v, \pi)$ and $G' = (N, v, \pi')$, for each player $i$, it always holds $\phi(G, i) \geq \phi(G', i)$ whenever $\pi_{\mid N \setminus \{i\}} = \pi'_{\mid N \setminus \{i\}}$ and $N_{\pi \mid i} \subset N_{\pi' \mid i}$. 
\end{definition}

For each agent in the game, the arrival time is treated as private information, and an agent may choose to delay her arrival to gain additional benefits. The axiom of EA requires that for every agent $i$, when fixing the arrival order of all other players, arriving as early as possible is the dominant strategy for agent $i$. 

Given an OCG $G=(N,v,\pi)$, for each player $i\in N$, we define the \textit{marginal contribution} of $i$ to a coalition $S$ in $G$ as
$
\MC(G,S,i) = v(S\cup \{i\}) - v(S).
$

\begin{definition}[Shapley Value \citep{SHAP53a} (SV)]
	\label{def:SV}
	Given an OCG $G=(N,v,\pi)$, each player $i$'s Shapley Value is
	$
	\ShVa(G,i)=\frac{1}{|N|!}\sum_{S\subseteq N\setminus \{i\}} |S|! \cdot (|N|-|S|-1)! \cdot \MC(G,S,i).
	$ 
\end{definition}

The Shapley value assigns to each player their average marginal contribution across all possible coalitions. It ensures that players are rewarded fairly based on how much they add to the value of any coalition they join. A follow-up definition, termed \textit{Shapley-Fairness} extends the Shapley Value in online cooperative games.

\begin{definition}[Shapley-Fairness (SF)]
	\label{def:SF}
	Given an OCG $G=(N,v,\pi)$, an online value-sharing rule $\phi(G)$ satisfies Shapley-Fair (SF) if for each player $i\in N$, 
	$
	\frac{1}{|N|!}\sum_{\pi\in \Pi(N)}\phi(G, i) = \ShVa(G,i).
	$
\end{definition}

{\citet{GZZ+24a} showed that there is no value-sharing rule satisfying STAY, EA, and SF simultaneously. Therefore, in this paper, we will forego SF with the goal of satisfying EA and STAY and other properties.}
Intuitively, \textit{SF} implies that for any given OCG $G$, if all the arrival orders of the players are equally likely then the expected value (or, payoff) of any agent $i$ is her Shapley value. This is a fairness axiom which is alternately referred to as ``random arrival'' (see, e.g., \citep{Aziz13d}). 

\section{Desirable Participation Incentive Axioms}
\label{sec:Participation_Axioms}
In this section, we introduce some new characterizations that describe players’ participation incentives in online cooperative games. To illustrate these new axioms, we first introduce a notion for players who create a new positive marginal contribution upon their arrival. {A similar idea has been studied by \citet{GZZ+24a} which they refer to as ``critical player'' in the simple game with 0-1 valuation.}  {The definition of critical player is slightly different from the definition of contributional player, which is defined on the newly arrived player.}

\begin{definition}[Contributional Player]
	\label{def:Contributional_Player}
	Given an OCG $G=(N,v,\pi)$, for each player $i$ in arriving order $\pi$, if $v(N_{\pi \mid i}) > v(N_{\pi \mid i}\setminus \{i\})$, then player $i$ is called a \emph{contributional} player under permutation $\pi$ in $G$.
\end{definition}

Contributional players are those whose arrival generates new positive marginal contributions. Based on this concept, we introduce new participation incentive axioms. The first is termed \textit{Strong Incentive to Stay} (S-STAY), which refines the \textit{STAY} axiom with a more stringent criterion.

\begin{definition}[Strong Incentive to Stay (S-STAY)]
	\label{def:S-STAY}
	An online value-sharing rule $\phi(G)$ satisfies \textit{Strong Incentive to Stay} (S-STAY) if, given any OCG $G = (N, v, \pi)$, it satisfies the \textit{STAY} axiom and additionally meets the condition: for every player $i$, if $i$ is a contributional player and there exists a player $j \prec_{\pi} i$ such that $v(N_{\pi \mid i}) > v(N_{\pi \mid i} \setminus \{j\})$, then for player $j$, $\phi(G^{i}, j) > \phi(G^{j}, j)$\footnote{Whenever the prefix sub-game is clear from the context, we simplify the notation of $G^{N_{\pi \mid i}}=(N_{\pi \mid i}, v_{N_{\pi \mid i}}, \pi_{N_{\pi \mid i}})$ by $G^i=(N_{\mid i}, v_{\mid i},\pi_{\mid i})$.}.
\end{definition}

Intuitively, S-STAY not only requires each player's cumulative shared value is non-decreasing, but also stipulates that if an already arrived player $j$ contributes to the new marginal value created by an arriving contributional player $i$, then player $j$ must share a fraction of this new marginal value. S-STAY offers stronger incentives for each arriving player to remain within the grand coalition, as it enables them to potentially benefit from future cooperation with newly arriving agents.

We term the second axiom \textit{Incentive for Participation} (PART), underlying an instant value sharing for each arriving player if the new player is a contributional player.

\begin{definition}[Incentive for Participation (PART)]
	An online value-sharing rule $\phi(G)$ satisfies \textit{Incentive for Participation} (PART) if, given any OCG $G = (N, v, \pi)$, for every player $i$, if $i$ is a contributional player, then $\phi(G^{i}, i) > 0$.
\end{definition}
For an OCG $G$, PART depicts that for every contributional player $i$ in $G$, $i$ gets some instant shared value after joining the coalition.

The third axiom is \textit{Individual Rationality} (IR), consistent with the definition in classic cooperative games.
\begin{definition}[Individual Rationality (IR)]
	Given an OCG $G=(N,v,\pi)$, an online value-sharing rule $\phi$ is individual rational (IR) if for each player $i\in N$, $\phi(G, i) \geq v(\{i\})$.
\end{definition}

Regarding fairness characterization, we introduce a concept called \textit{Online-Dummy} (OD) to evaluate the fairness of value-sharing rules. The Online-Dummy concept is defined within each prefix sub-game $G^i$, where any \emph{dummy player} in $G^i$ shares no fraction of the marginal value $\MC(G^i,N_{\pi\mid i}\setminus \{i\},i)$. Consequently, Online-Dummy implies the Dummy axiom in classic cooperative games. We begin by defining what constitutes a \emph{dummy player}.

\begin{definition}[Dummy Player]
	Given an OCG $G=(N,v,\pi)$, for player $i$, if $v(S\cup\{i\})=v(S)$ for any $S\subseteq N$, then player $i$ is a dummy player in $G$. 
\end{definition}

Based on the definition of dummy player, the dummy axiom in classic cooperative game requires that every dummy player never obtains positive sharing value. We extend it into the Online Dummy axiom as follows.

\begin{definition}[Online Dummy (OD)]
	An online value-sharing rule $\phi$ satisfies Online-Dummy if for each prefix sub-game $G^{i}=(N_{\mid i}, v_{\mid i},\pi_{\mid i})$, for any dummy player $j$ in game $G^i$, $\phi(G^{i},j)=0$.
\end{definition}

In the context of fairness notions within online cooperative games, it is noteworthy that Shapley-Fairness (SF) implies the classical Dummy axiom but not the Online Dummy axiom.

\begin{proposition}
	SF implies Dummy axiom but not Online Dummy axiom.
\end{proposition}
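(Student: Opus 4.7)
The plan for the first half is direct: if $i$ is a dummy in $G = (N, v, \pi)$, then every marginal contribution of $i$ vanishes, so $\ShVa(G, i) = 0$. The SF identity then yields $\tfrac{1}{|N|!}\sum_{\pi' \in \Pi(N)} \phi((N, v, \pi'), i) = 0$, and combining this with non-negativity $\phi \geq 0$ forces each summand, and in particular $\phi(G, i)$, to be zero---exactly the classical Dummy axiom.

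For the second half, the plan is to exhibit a counterexample. The conceptual asymmetry I will exploit is that SF is a statement about the \emph{averaged} value of $\phi$ over permutations of the grand coalition, whereas OD is a pointwise condition on the rule applied to each prefix sub-game. A player $j$ may be a dummy inside a prefix sub-game $G^i$ (with respect to the restricted valuation $v_{\mid i}$) while still contributing positively somewhere in $G$, in which case $\ShVa(G, j) > 0$ and SF does \emph{not} force $\phi(G, j) = 0$; so SF places no direct constraint on the rule's action on the sub-game $G^i$.

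Concretely, I will take $N = \{1, 2, 3\}$, $\pi = (1, 2, 3)$, and values chosen so that player $2$ is a local dummy in $G^{\{1, 2\}}$ but not in $G$: for instance, $v(\{1\}) = 1$, $v(\{2\}) = v(\{3\}) = 0$, $v(\{1, 2\}) = v(\{1, 3\}) = 1$, $v(\{2, 3\}) = 0$, and $v(\{1, 2, 3\}) = 2$. Here $v(\{2\}) = v(\emptyset)$ and $v(\{1, 2\}) = v(\{1\})$ witness the local dummy property, while $\MC(G, \{1, 3\}, 2) = 1$ shows player $2$ is not a global dummy. The counterexample rule will coincide with the Shapley value on the three-player game (automatically satisfying SF on $G$) but reallocate the unit value on the two-player sub-game $G^{\{1, 2\}}$ so that the local dummy player $2$ receives a strictly positive share, thereby violating OD.

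The main obstacle is convincing the reader that this construction is coherent---one must observe that the $|N|!$-average in the definition of SF ranges only over permutations of the full player set $N$ and therefore places no restriction on how $\phi$ allocates value on prefix sub-games with strictly fewer players; this freedom on $G^{\{1, 2\}}$ is precisely what lets a local dummy be rewarded without contradicting SF on $G$.
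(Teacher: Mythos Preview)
The paper states this proposition without proof, so there is no original argument to compare against.

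Your Part~1 argument is correct and is the natural one: a dummy $i$ has $\ShVa(G,i)=0$, the SF averaging identity then gives $\sum_{\pi'}\phi((N,v,\pi'),i)=0$, and nonnegativity forces each summand to vanish.

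Your Part~2 construction is sound under the reading you adopt---that SF is a constraint tied to the full player set $N$ (the average runs over $\Pi(N)$ only) and therefore says nothing about how $\phi$ behaves on strict prefix sub-games $G^i$ with fewer players. Under that reading, defining $\phi$ to be Shapley on all three-player instances but splitting the unit value non-trivially on $G^{\{1,2\}}$ is a legitimate witness: SF holds on $G$, yet the local dummy $2$ in $G^{\{1,2\}}$ receives a positive share, violating OD. Your game is monotone and normalized, and you correctly verify that $2$ is dummy in the two-player restriction but not globally.

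One caveat is worth making explicit in the write-up. If instead one reads SF as a \emph{universal} property of the rule $\phi$ (the averaging identity must hold for every OCG, in particular for $(\{1,2\},v_{|\{1,2\}},\cdot)$), then your own Part~1 argument applied to that smaller game forces $\phi(G^{\{1,2\}},2)=0$, and SF would in fact imply OD---so the proposition would be false and your counterexample rule would fail SF on the two-player game. You already gesture at this when you note that the average ranges only over $\Pi(N)$; I would recommend stating outright that the non-implication is being established for the per-game formulation of SF (which is how the paper phrases the definition, beginning ``Given an OCG $G=(N,v,\pi)$\ldots''), since under the strict universal reading the claim does not survive.
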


\section{Limitation of Existing Rules} \label{limitation_of_existing_rules}
In this section, we revisit three existing online value-sharing rules: the Distribute Marginal Contribution (\DMC) rule, the Shapley Value (\SV) rule, and the extended RFC (\eRFC) rule \citep{GZZ+24a} and highlight their shortcomings with respect to the participation axioms.

\begin{tcolorbox}[title=Distribute Marginal Contribution (\DMC) Rule, title filled]
	\textbf{Input}: $G=(N,v,\pi)$\\
	$\forall\, i\in N$, initialize $\phi(G,i)\leftarrow 0$.\\
	For each player $i$ in arriving order $\pi$, \\
	\tab $\phi(G, i)\leftarrow \MC(G^i, N_{\pi \mid i}\setminus \{i\},i)$. \\
	\textbf{Output}: $\phi(G)$.
\end{tcolorbox}

\begin{proposition}[\citet{GZZ+24a}]
	\label{DMC_property_I4S_SF}
	\DMC rule satisfies STAY and SF. It satisfies EA if and only if $v(\cdot)$ is submodular.
\end{proposition}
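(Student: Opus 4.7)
The plan is to verify the three claims separately, exploiting the fact that under \DMC each player $i$ receives, at the moment of arrival, the quantity $m_i := v(N_{\pi\mid i}) - v(N_{\pi\mid i}\setminus\{i\})$, and this quantity is never modified afterwards. This immediately gives three clean reductions.

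First, for STAY, I would observe that if $T \subseteq S$ are both prefixes of $\pi$ containing a player $q$, then $q$'s marginal contribution upon arrival depends only on the arrival order up to $q$, which is identical in $G^T$ and $G^S$. Therefore $\phi(G^T,q) = \phi(G^S,q) = m_q$, so STAY holds with equality (monotonicity of $v$ ensures $m_q \geq 0$, though that is not needed for STAY itself).

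Second, for SF, I would fix a player $i$ and average $\phi(G,i)$ over all permutations $\pi \in \Pi(N)$. Grouping permutations by the set $S = N_{\pi\mid i}\setminus\{i\}$ of players arriving strictly before $i$, there are exactly $|S|!\,(n-|S|-1)!$ such permutations, and each contributes $v(S\cup\{i\}) - v(S)$. Dividing by $n!$ recovers exactly the Shapley value formula in \Cref{def:SV}.

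Third, for EA iff submodular, I would argue both directions. Assume $v$ is submodular. Given two games $G,G'$ agreeing on $N\setminus\{i\}$ with $N_{\pi\mid i}\setminus\{i\} \subsetneq N_{\pi'\mid i}\setminus\{i\}$ (call these $S \subsetneq T$), submodularity yields $v(S\cup\{i\}) - v(S) \geq v(T\cup\{i\}) - v(T)$, i.e., $\phi(G,i) \geq \phi(G',i)$. For the converse, suppose EA holds; to prove submodularity, pick arbitrary $i\in N$ and $S\subseteq T\subseteq N\setminus\{i\}$. I construct a single permutation of $N\setminus\{i\}$ that lists $S$ first, then $T\setminus S$, then $N\setminus(T\cup\{i\})$, and two insertions of $i$: one right after $S$ (giving $\pi$) and one right after $T$ (giving $\pi'$). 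These two permutations agree on $N\setminus\{i\}$ and satisfy $N_{\pi\mid i}\subsetneq N_{\pi'\mid i}$, so EA gives exactly the submodular inequality for $(S,T,i)$.

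None of the steps is really an obstacle; the only point requiring care is the converse direction of the EA characterization, where the concrete permutation construction must be exhibited to upgrade the pointwise EA condition (which only compares prefixes under a common ordering of $N\setminus\{i\}$) to the full submodularity inequality (which quantifies over arbitrary pairs $S\subseteq T$). The prefix-then-difference ordering handles this cleanly.
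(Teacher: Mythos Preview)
Your argument is correct in all three parts. Note, however, that the paper does not actually supply a proof of this proposition: it is stated with attribution to \citet{GZZ+24a} and left unproved in the present text, so there is no in-paper argument to compare against. Your write-up is precisely the standard verification one would expect---the STAY and SF parts are immediate from the definition of \DMC and the random-arrival interpretation of the Shapley value, and your treatment of the EA characterization is clean. The only delicate step, as you flagged, is the converse (EA $\Rightarrow$ submodular), and your permutation construction (listing $S$, then $T\setminus S$, then the remainder, with $i$ inserted either after $S$ or after $T$) is exactly what is needed to extract the diminishing-marginal-returns inequality from the EA hypothesis. One cosmetic remark: the case $S=T$ should be excluded from the construction since EA requires a strict inclusion $N_{\pi\mid i}\subsetneq N_{\pi'\mid i}$, but as you implicitly note, the inequality is then an identity and there is nothing to prove.
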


The second rule is the Shapley Value (\SV) Rule, which computes the Shapley value for each agent in each prefix sub-game and distribute the values among all the attending agents. 

\begin{proposition}[\citet{GZZ+24a}]
	\SV rule satisfies EA and SF, but does not satisfy STAY.
\end{proposition}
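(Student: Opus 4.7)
The plan is to establish the three claims separately, relying on the key observation that for any fixed cooperative game the Shapley value depends only on the valuation function, not on the arrival order. Under the \SV rule, $\phi(G,i)=\ShVa(G,i)$ regardless of $\pi$, and in any prefix sub-game $G^S$ each agent $q\in S$ receives $\ShVa(G^S,q)$. This invariance drives the verifications of SF and EA, while also pointing to why STAY fails.

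For SF, since the rule's output $\phi(G,i)=\ShVa(G,i)$ is identical for every permutation, averaging over $\Pi(N)$ trivially gives $\frac{1}{|N|!}\sum_{\pi\in\Pi(N)}\phi(G,i)=\ShVa(G,i)$, exactly matching the SF condition. For EA, let $G=(N,v,\pi)$ and $G'=(N,v,\pi')$ satisfy the EA hypothesis. Since they share the same $N$ and $v$, we obtain $\phi(G,i)=\ShVa(G,i)=\ShVa(G',i)=\phi(G',i)$, yielding the required weak inequality with equality.

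The main substance is refuting STAY, which I would do with a small counter-example. Take the three-player majority game: $N=\{1,2,3\}$, $v(S)=1$ if $|S|\geq 2$ and $v(S)=0$ otherwise, with arrival order $\pi=(1,2,3)$; the valuation is normalized and monotone as the model requires. In the prefix sub-game $G^{2}$ on $\{1,2\}$, both singletons have value $0$ while the pair has value $1$, so by symmetry each of players $1$ and $2$ receives Shapley value $\tfrac{1}{2}$. In the full game $G^{3}=G$, the three players are symmetric, and each obtains Shapley value $\tfrac{1}{3}$. Since $\{1,2\}\sqsubseteq\{1,2,3\}$, STAY would require $\phi(G^{2},1)\leq \phi(G^{3},1)$; but $\tfrac{1}{2}>\tfrac{1}{3}$, contradicting STAY. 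The easy part is therefore SF and EA (they follow instantly from permutation invariance), and the only real work is locating a valuation where adding a player strictly dilutes an incumbent's Shapley value, for which the majority game is the canonical witness.
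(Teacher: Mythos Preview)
Your proof is correct. The paper itself does not supply a proof of this proposition; it simply attributes the result to \citet{GZZ+24a}, so there is nothing to compare against beyond noting that your argument is the natural one. Your reasoning for SF and EA via permutation-invariance of the Shapley value is exactly the point, and the three-player majority game is the standard minimal witness for the failure of STAY (player~1's Shapley value drops from $\tfrac{1}{2}$ in $G^{\{1,2\}}$ to $\tfrac{1}{3}$ in $G^{\{1,2,3\}}$).
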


To overcome the disadvantages of \DMC and \SV rules, \citet{GZZ+24a} proposed a novel rule called `\textbf{R}eward the \textbf{F}irst \textbf{C}ritical Player' (\RFC) rule for SOCGs. 

\begin{tcolorbox}[title=Reward First Critical Player (\RFC) Rule, title filled]
	\textbf{Input}: $G=(N,v,\pi)$\\
	$\forall\, i\in N$, initialize $\phi(G,i)\leftarrow 0$;\\
	For each player $i$ in arriving order $\pi$, \\
	\tab $S_i \leftarrow \{j \mid j\in N_{\pi \mid i} , v(N_{\pi \mid i}) > v(N_{\pi \mid i}\setminus \{j\})\}$;\\
	\tab $j^\ast \leftarrow$ first arrived player in $S_i$;\\
	\tab $\phi(G, j^\ast) \leftarrow \phi(G,j^\ast) + \MC(G^i, N_{\pi \mid i}\setminus \{i\}, i)$; \\
	\textbf{Output}: $\phi(G)$.
\end{tcolorbox}

For general valuation functions, \citet{GZZ+24a} introduced a greedy monotone (GM) decomposition method and extended \RFC rule into \eRFC rule by decomposing an OCG into multiple SOCGs and summing up the results of \RFC rule in decomposed SOCGs.

\begin{proposition}[\citet{GZZ+24a}]
	\eRFC rule satisfies STAY, SF, but does not satisfy EA in general.
\end{proposition}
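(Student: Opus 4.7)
The plan is to establish the three claims separately, in the order \textbf{STAY}, \textbf{SF}, and the failure of \textbf{EA}, each by a different argument.

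For \textbf{STAY}, the key observation is that within any SOCG the \RFC rule only ever adds non-negative marginal contributions $\MC(G^i, N_{\pi \mid i} \setminus \{i\}, i)$ to some player's allocation as new players arrive, so no already-arrived player's allocation ever decreases. The \eRFC rule is the sum of \RFC allocations over the SOCG components produced by the greedy monotone (GM) decomposition of the current valuation. I would then argue that as the prefix grows from $G^T$ to $G^S$ with $T \subseteq S$ and $T,S \sqsubseteq \pi$, the GM decomposition of $v_{\mid S}$ can be related to that of $v_{\mid T}$ via a coupling that embeds each SOCG component of the smaller game into a component of the larger one. Within each such pairing, the allocation to any player $q \in T$ can only grow, and summing over components gives $\phi(G^T, q) \le \phi(G^S, q)$.

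For \textbf{SF}, I would first show that on any SOCG the expected allocation of player $i$ under \RFC, taken over a uniformly random arrival order, equals $\ShVa(G,i)$. The argument is that \RFC assigns the entire unit of value to the first-arrived critical player at the moment the coalition becomes winning; a symmetry argument on uniform random permutations then shows that the probability that $i$ is this first-arrived critical player coincides with the pivotal probability that characterizes the Shapley value in a simple game. Since the Shapley value is additive in the valuation and the GM decomposition writes $v = \sum_k v_k$ with each $v_k$ a 0--1 SOCG valuation, exchanging summations yields
\[
\frac{1}{|N|!} \sum_{\pi \in \Pi(N)} \phi(G,i) \;=\; \sum_k \frac{1}{|N|!} \sum_{\pi \in \Pi(N)} \phi(G_k,i) \;=\; \sum_k \ShVa(G_k,i) \;=\; \ShVa(G,i),
\]
which is exactly SF.

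For the failure of \textbf{EA}, I would exhibit a small OCG with a non-submodular valuation together with two permutations $\pi$ and $\pi'$ that agree on $N \setminus \{i\}$ but place $i$ in different positions, with $N_{\pi \mid i} \subset N_{\pi' \mid i}$, for which $\phi(G,i) < \phi(G',i)$. The construction exploits the fact that delaying $i$'s arrival can change the GM decomposition of some prefix sub-games, in particular placing $i$ as the first-arrived critical player in a component where she would not have been rewarded had she arrived earlier. A three- or four-player instance with two overlapping minimal winning coalitions should suffice.

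The main obstacle will be the STAY argument, because the GM decompositions of $G^T$ and $G^S$ are not literally nested: the greedy procedure may reassign players across components when the underlying universe grows. A careful coupling/matching argument between the two decompositions is therefore needed to bound the new per-component allocations from below by the old ones, rather than a direct set-theoretic comparison. The SF argument is routine once linearity of the Shapley value and symmetry of the uniform distribution on permutations are invoked, and the EA counter-example reduces to bookkeeping on a small instance.
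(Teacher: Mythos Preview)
The paper does not give its own proof of this proposition: it is attributed to \citet{GZZ+24a} and stated without proof, so there is nothing in the present paper to compare your attempt against. That said, two of your sub-arguments rest on misunderstandings of the GM decomposition that are worth correcting.

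For \textbf{STAY}, your ``main obstacle'' is not an obstacle at all. The paper records (in the appendix, as a property proved by \citet{GZZ+24a}) that the GM decomposition is \emph{consistent} across prefix sub-games: for any prefix $G^i$, the decomposition $D(G^i)$ agrees with $D(G)$ restricted to $N_{\mid i}$. So the decompositions of $G^T$ and $G^S$ are literally nested, and no coupling argument is needed; monotonicity of \RFC within each fixed SOCG component suffices.

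For the failure of \textbf{EA}, your proposed mechanism is wrong. The GM decomposition depends only on the valuation function $v$, not on the arrival order $\pi$ (the paper states this explicitly when proving EA for \DULMES). Delaying $i$'s arrival therefore cannot ``change the GM decomposition of some prefix sub-games'' in the sense you intend. The failure of EA already occurs for \RFC on a single SOCG: take $N=\{1,2,3\}$ with $v(\{1,2\})=v(\{2,3\})=v(N)=1$ and all other proper coalitions worth $0$. Under $\pi=(1,2,3)$ player $2$ receives $0$ (the value is created at step $2$ and the first critical player is $1$), whereas under $\pi'=(1,3,2)$ player $2$ receives $1$ (the value is created at step $3$ and the only critical player in $N$ is $2$). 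So player $2$ strictly gains by delaying, and no decomposition subtleties are involved.

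Your SF argument is fine in outline and matches the standard linearity-plus-symmetry route.
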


The following impossibility result is implied from the results by \citet{GZZ+24a}. 

\begin{proposition}[Impossibility \citet{GZZ+24a}]
	There is no value-sharing rule satisfying STAY, EA, and SF simultaneously.
\end{proposition}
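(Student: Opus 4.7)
The plan is to exhibit a small three-player OCG in which STAY, EA, and SF cannot be jointly satisfied. I would work with the ``path'' game $G$ on $N=\{1,2,3\}$ with $v(\{i\})=0$ for every singleton, $v(\{1,2\})=v(\{2,3\})=1$, $v(\{1,3\})=0$, and $v(N)=1$; a direct computation gives $\ShVa(G,2)=2/3$ and $\ShVa(G,1)=\ShVa(G,3)=1/6$, with player $2$ acting as a ``connector''. Write $G_\pi=(N,v,\pi)$ for the instance with arrival order $\pi$.

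The first step uses STAY together with $\sum_i\phi(G_\pi,i)=v(N)=1$ to pin down the four ``dominant-prefix'' permutations $(1,2,3),(2,1,3),(2,3,1),(3,2,1)$, whose two-player prefix already achieves value $1$: in each of them the last arrival must receive $0$, and the first two shares freeze at their values in the corresponding pair-prefix sub-game. Letting $a$ denote player $1$'s share in the sub-game on $\{1,2\}$ under order $(1,2)$ and $b$ denote player $2$'s share in the sub-game on $\{2,3\}$ under order $(2,3)$, SF applied to each of these $2$-player sub-games (both have Shapley values $(1/2,1/2)$) pins the opposite orientations down, so $\phi(G_{(2,1,3)},1)=1-a$ and $\phi(G_{(3,2,1)},3)=b$. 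In the two remaining ``flexible'' permutations $(1,3,2)$ and $(3,1,2)$, the two-player prefix has value $0$ and STAY leaves the full unit of value to be split subject only to non-negativity. Summing SF on player $3$ over all six permutations then reads $0+\phi(G_{(1,3,2)},3)+0+(1-b)+\phi(G_{(3,1,2)},3)+b=1$, forcing both flexible shares of player $3$ to $0$; the analogous sum for player $1$ forces her two flexible shares to $0$ as well, and hence $\phi(G_{(1,3,2)},2)=\phi(G_{(3,1,2)},2)=1$.

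The contradiction then falls out of EA. Comparing $(1,2,3)$ with $(2,1,3)$ (others in the fixed order $(2,3)$) applied to player $1$ gives $a\ge 1-a$, hence $a\ge 1/2$. Comparing $(1,2,3)$ with $(1,3,2)$ (others in the fixed order $(1,3)$) applied to player $2$ gives $1-a=\phi(G_{(1,2,3)},2)\ge \phi(G_{(1,3,2)},2)=1$, hence $a\le 0$ -- directly contradicting $a\ge 1/2$. The hard part is the bookkeeping in the middle step: one has to verify that SF, read simultaneously on the two $2$-player sub-games and on the full game, really does exhaust the freedom left by STAY in the two non-dominant permutations before EA is brought to bear. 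This collapse relies crucially on the asymmetry introduced by the ``connector'' player $2$; symmetric or strictly superadditive toy examples retain enough slack that no contradiction appears, which is why isolating the right valuation is the real conceptual work of the proof.
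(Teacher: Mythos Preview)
Your argument is correct. The paper itself does not prove this proposition; it merely cites the result from \citet{GZZ+24a}, so there is no in-paper proof to compare against. Your three-player ``path'' game and the two-step collapse (first SF+STAY to pin down the flexible permutations, then EA to derive $a\ge 1/2$ and $a\le 0$) give a clean, self-contained derivation.

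One point that deserves an explicit sentence of justification in a write-up: you invoke SF on the two-player prefix sub-games $(\{1,2\},v_{\mid\{1,2\}},\cdot)$ and $(\{2,3\},v_{\mid\{2,3\}},\cdot)$ to conclude $a'=1-a$ and the analogous relation for $b$. This is legitimate because a value-sharing rule is a map on \emph{all} OCGs and the axioms are quantified over all instances, so a rule satisfying SF must satisfy it on these two-player games as well; but it is the step a reader is most likely to question, so say so explicitly. Once that is granted, your bookkeeping is correct: the SF sums for players $1$ and $3$ over the six permutations of $N$ each already total $1$ from the four ``dominant-prefix'' orders, forcing the flexible shares to zero by non-negativity, and the two EA comparisons you chose deliver the contradiction immediately.
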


In light of the associated impossibility result, we neglect the SF axiom and instead focus on exploring how well the EA axiom and the newly proposed participation incentive axioms can be satisfied by value-sharing rules. Unfortunately, existing rules fail to meet these newly introduced participation incentive axioms, and we outline these shortcomings as follows.

\begin{proposition}
\label{proposition:existing_impossibility}
	\eRFC rule does not satisfy PART.
\end{proposition}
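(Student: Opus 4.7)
The plan is to exhibit a small counterexample, specifically a 2-player unanimity SOCG, on which \eRFC (which coincides with \RFC on simple games, since the GM decomposition of a SOCG is the SOCG itself) assigns zero instantaneous reward to a contributional player.

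\textbf{Construction.} Take $N=\{1,2\}$ with $v(\emptyset)=v(\{1\})=v(\{2\})=0$ and $v(\{1,2\})=1$, and the arrival order $\pi=(1,2)$. This $v$ is normalized, monotone, and $\{0,1\}$-valued, so $G=(N,v,\pi)$ is an SOCG.

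\textbf{Verification.} When player $2$ arrives, we have $v(N_{\pi\mid 2})=v(\{1,2\})=1$ and $v(N_{\pi\mid 2}\setminus\{2\})=v(\{1\})=0$, so $\MC(G^{2},\{1\},2)=1>0$; hence player $2$ is contributional under \Cref{def:Contributional_Player}. Now run \RFC on $G^{2}$. The critical set is
\[
S_2=\bigl\{\, j\in\{1,2\} : v(\{1,2\})>v(\{1,2\}\setminus\{j\})\,\bigr\}=\{1,2\},
\]
because removing either player drops the value from $1$ to $0$. The first-arrived element of $S_2$ is player $1$, so \RFC adds the marginal $\MC(G^{2},\{1\},2)=1$ to $\phi(G,1)$, leaving $\phi(G^{2},2)=0$. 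Because $G$ is already an SOCG, the GM decomposition returns $G$ itself, so $\eRFC$ outputs the same allocation, giving $\phi(G^{2},2)=0$ while player $2$ is contributional. This contradicts PART.

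\textbf{Main obstacle.} The essential point is that the \RFC/\eRFC design principle directs all newly created value to the \emph{first} critical player, so any later critical contributor is starved of instantaneous reward. The only subtle step is verifying that the GM decomposition does not spread any portion of the newly created marginal value to player $2$ via some auxiliary SOCG; this is immediate here because the game is itself a single SOCG and no further decomposition occurs. Hence a minimal unanimity game suffices, and no more elaborate construction is needed.
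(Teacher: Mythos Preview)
Your proof is correct and uses essentially the same counterexample and reasoning as the paper: the two-player unanimity SOCG with $\pi=(1,2)$, where \eRFC/\RFC rewards the first critical player $1$ and leaves the contributional player $2$ with zero. Your added remark that the GM decomposition is trivial on an SOCG is a helpful clarification but does not change the argument.
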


\begin{proof}
    Consider the following SOCG $G=(N,v, \pi)$, where $N=\{1,2\},\pi=(1,2)$ and $v(\{1\})=v(\{2\})=0,v(\{1,2\})=1$. Firstly, player $1$ arrives with no value, then player $2$ arrives and the grand coalition value is $1$. \eRFC rule allocates the entire value $1$ to player 1, who is the first arrived player and contributes to the grand coalition. However, player $2$, as a contributional player, receives zero shared value, thereby violating the PART axiom.
\end{proof}

\begin{proposition}
	\DMC, \SV, and \eRFC rules do not satisfy S-STAY.
\end{proposition}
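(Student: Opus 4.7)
The plan is to produce three small explicit counterexamples, one tailored to each of \DMC, \SV, and \eRFC, each witnessing a failure of the strict-increase clause of S-STAY. Since S-STAY explicitly subsumes STAY and \SV is already known from the earlier proposition to fail STAY, its failure of S-STAY is immediate, but I would still record a minimal two-player instance so the violation is transparent. For \DMC and \eRFC, which both satisfy STAY, I need instances where some earlier co-contributor to a contributional arrival receives no strict increment.

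For \DMC I would take $N=\{1,2\}$, $\pi=(1,2)$ with $v(\{1\})=v(\{2\})=1$ and $v(\{1,2\})=3$. By definition \DMC routes the full marginal contribution $2$ to player~2, so $\phi(G^1,1)=\phi(G^2,1)=1$. Player~2 is contributional and player~1 is an earlier player with $v(\{1,2\})>v(\{2\})$, so S-STAY demands $\phi(G^2,1)>\phi(G^1,1)$, which fails. For \SV I would use $N=\{1,2\}$, $\pi=(1,2)$ with $v(\{1\})=3$, $v(\{2\})=1$, $v(\{1,2\})=4$; a direct Shapley computation gives $\phi(G^1,1)=3$ and $\phi(G^2,1)=\tfrac{1}{2}\cdot 3+\tfrac{1}{2}\cdot(4-1)=3$, again contradicting the required strict increase triggered by the contributional arrival of player~2.

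For \eRFC I would use the SOCG on $N=\{1,2,3\}$ with $\pi=(1,2,3)$ and $v(S)=1$ iff $S=N$. As a simple game, \eRFC collapses to \RFC. When player~3 arrives all three players are critical (removing any one drops the value to $0$), so the full marginal contribution $1$ is assigned to the earliest critical player, namely player~1. Hence $\phi(G^3,2)=\phi(G^2,2)=0$, yet player~3 is contributional and player~2 satisfies $v(N)>v(N\setminus\{2\})$, so S-STAY's strict-increase condition for player~2 is violated.

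The only subtlety is matching the dual quantifier structure of S-STAY: the newcomer must be contributional, and the chosen earlier player must herself be a co-contributor to the freshly generated value. For \DMC this is automatic because no earlier player ever receives an increment under any arrival; for \eRFC it forces the construction to contain at least two critical players among the earlier arrivals, so that some non-first-critical co-contributor is frozen at zero—which is precisely what the three-player instance above arranges. Once the witnesses are in place, each verification is a one-line computation from the rule's definition.
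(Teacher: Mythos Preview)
Your proposal is correct and follows the same strategy as the paper: exhibit a small explicit instance for each rule where the strict-increase clause of S-STAY is violated. The paper uses a single three-player SOCG (all singletons and pairs zero, grand coalition one) to handle both \DMC and \eRFC, and dismisses \SV via its known failure of STAY; your \eRFC instance is identical to the paper's, while your two-player instances for \DMC and \SV are different but equally valid (and your \SV example has the minor bonus of showing the failure is not merely inherited from a STAY violation).
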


\begin{proof}
	Since \SV rule fails to satisfy STAY, it does not satisfy S-STAY. For \DMC and \eRFC rules, consider an SOCG $G=(N,v,\pi)$ where $N=\{1,2,3\}$, $\pi=(1,2,3)$, and $v(\{1\})=v(\{2\})=v(\{3\})=v(\{1,2\})=v(\{1,3\})=v(\{2,3\})=0$, $v(\{1,2,3\})=1$. \DMC allocates the value $1$ to player $3$ because $3$'s arrival creates the new marginal value $1$, i.e., $\phi(G,1)=\phi(G,2)=0, \phi(G,3)=1$. It does not satisfy S-STAY for player $1$ and $2$ as they both contribute to the grand coalition after player $3$'s arrival, in which S-STAY requires player $1$ and $2$ should receive some positive value. With regard to \eRFC rule, value $1$ will be wholely allocated to player $1$, i.e., $\phi(G,1)=1,\phi(G,2)=\phi(G,3)=0$. Note that player $2$, who is esstential for creating the value $1$, however, gets $0$ in $G$. This violates S-STAY which requires that $\phi(G,2)>0$. 
\end{proof}

Remarkably, these three existing value-sharing rules fail to meet our newly proposed participation axioms. Moreover, they do not satisfy the EA axiom effectively: both \DMC and \eRFC rules fail to satisfy EA for general valuations, while \SV rule’s satisfaction of EA is trivial because its value-sharing is entirely independent of the arrival order. Consequently, a pressing issue is whether we can devise improved value-sharing rules that not only satisfy EA for general valuation functions but also adhere to these new participation incentive properties.

\section{New Desirable Rules} \label{new_desirable_rules}
In the previous section, we highlighted the failure of existing rules to satisfy the EA axiom and participation incentive axioms. Motivated by these concerns, we propose three new online value-sharing rules: Marginal Equal Share (\MES) rule, Non-Dummy Marginal Equal Share (\NDMES) rule, and Upward Lexicographic Marginal Equal Share (\ULMES) rule, all designed to meet aformentioned axioms. Owing to space constraints, some of the proofs in this section are relegated in the appendix.

\subsection{Initial Attempts: \MES and \NDMES}
Recall that \eRFC rule distributes the entire value to the first player who contributes to the new marginal value. Instead of assigning the entire value to a single player, a more intuitive approach is to equally distribute the new marginal value among all existing players. This principle underpins our first rule: Marginal Equal Share (\MES) rule.

\begin{tcolorbox}[title=Marginal Equal Share (\MES) Rule,title filled]
	\textbf{Input}: $G=(N,v,\pi)$\\
	For each player $i$ in arriving order $\pi$, \\
	\tab Initialize $\phi(G,i)\leftarrow 0$.\\
	\tab For each agent $j\in N_{\pi \mid i}$, \\	
	\tab[1cm] Update $\phi(G, j) \leftarrow \phi(G, j) + \frac{1}{|N_{\pi \mid i}|} \MC(G^i, N_{\pi \mid i}\setminus\{i\},i)$.\\
	\textbf{Output}: $\phi(G)$.
\end{tcolorbox}

\begin{theorem}\label{MES_property}
	\MES rule satisfies S-STAY, EA, and PART. 
\end{theorem}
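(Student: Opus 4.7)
The three axioms can be checked separately. PART and S-STAY follow almost immediately from the \MES update rule; EA is the main obstacle.

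\textbf{For PART.} At the step at which $i$ arrives, she lies in $N_{\pi\mid i}$, so \MES credits her by $\MC(G^i, N_{\pi\mid i}\setminus\{i\}, i)/|N_{\pi\mid i}|$. If $i$ is contributional this increment is strictly positive, and all subsequent updates to $\phi(\cdot, i)$ are non-negative by monotonicity of $v$, so $\phi(G^i, i) > 0$.

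\textbf{For S-STAY.} Monotonicity of $v$ makes every marginal contribution non-negative, hence each \MES update adds a non-negative amount to the cumulative share of every already-arrived player; this is STAY. When $i$ is contributional, the step-$i$ update strictly increases the share of every $j \prec_\pi i$ by $\MC/|N_{\pi\mid i}| > 0$, so combined with STAY applied up to the step just before $i$ arrives we get $\phi(G^i, j) > \phi(G^j, j)$. This is in fact stronger than the S-STAY requirement, which only demands strictness for the specific $j$ satisfying $v(N_{\pi\mid i}) > v(N_{\pi\mid i}\setminus\{j\})$.

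\textbf{For EA.} I reduce to an adjacent-swap lemma. Consider two permutations $\pi = (\ldots, x, i, \ldots)$ and $\tau = (\ldots, i, x, \ldots)$ that differ only by swapping $i$ with its immediate predecessor $x$; let $P$ be the set of players arriving before both and set $q = |P|+1$. Writing $A = v(P)$, $B = v(P\cup\{i\})$, $C = v(P\cup\{x\})$, and $D = v(P\cup\{x,i\})$, the prefix sets in $\pi$ and $\tau$ coincide from position $q+1$ onward, so all \MES updates to $i$'s share at later steps are identical and cancel. What remains is
\[
\phi(\tau, i) - \phi(\pi, i) \;=\; \frac{B-A}{q} + \frac{D-B}{q+1} - \frac{D-C}{q+1} \;=\; \frac{B-A}{q} + \frac{C-B}{q+1}.
\]
Clearing denominators, this is non-negative iff $B + qC \geq (q+1)A$, and this follows at once from the monotonicity inequalities $B \geq A$ and $C \geq A$. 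Hence moving $i$ one position earlier never decreases her cumulative share. For arbitrary $\pi, \pi'$ with $\pi_{\mid N\setminus\{i\}} = \pi'_{\mid N\setminus\{i\}}$ and $i$ appearing earlier in $\pi$, connect $\pi'$ to $\pi$ by a chain of such adjacent swaps that shift $i$ forward one step at a time (the relative order of all other players is preserved throughout); composing the single-swap inequality along the chain gives $\phi(G, i) \geq \phi(G', i)$.

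The only non-routine step is EA. A direct telescoping from position $p$ to position $p'$ yields a messy Abel-summation expression that must be controlled using monotonicity of $v$ term by term; the adjacent-swap reduction above bypasses that entirely, collapsing the whole argument to the two-line monotonicity check $B + qC \geq (q+1)A$.
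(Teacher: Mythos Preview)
Your proof is correct. PART and S-STAY are handled exactly as in the paper (the remark about ``subsequent updates'' in the PART paragraph is superfluous, since PART is about $\phi(G^i,i)$ at the moment $i$ arrives, but this is harmless).

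For EA the two arguments diverge. The paper fixes an arbitrary delay of $i$ from position $i$ to position $j$ and computes the whole difference $\phi(G_1,i)-\phi(G_2,i)$ in one shot: the increments after position $j$ cancel, the remaining sum $\sum_{k=i}^{j}\frac{1}{|N_{\pi_1\mid k}|}\MC(\cdot)$ is lower-bounded by replacing every denominator with the largest one $|N_{\pi_2\mid i}|$, the numerators telescope to $v(N_{\pi_1\mid j})-v(N_{\pi_1\mid i}\setminus\{i\})$, and a single application of monotonicity finishes. You instead reduce to the adjacent-swap case and chain. Your single-swap calculation collapses to the inequality $B+qC\ge (q+1)A$, which is indeed immediate from $B\ge A$ and $C\ge A$; the chaining is legitimate because each swap of $i$ with its immediate predecessor preserves $\pi_{\mid N\setminus\{i\}}$. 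The trade-off: the paper's telescoping handles an arbitrary displacement at once and generalises more directly to the \NDMES proof (where the denominators $|S_k|$ are no longer just prefix sizes), while your swap argument is shorter and avoids the ``replace all denominators by the largest'' estimate, at the cost of an extra induction over the length of the displacement.
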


Unlike \DMC and \eRFC rules, which allocate the marginal value to a single player, \MES rule adopts an egalitarian approach by distributing the value equally among all members of the grand coalition. It satisfies several desirable properties, including S-STAY, EA, and PART. Please refer to the appendix for proof details.

However, \MES rule fails to satisfy the fairness axiom, i.e, Online-Dummy (OD). The value assignment might be perceived unfair as some players could be the ``free-riders", i.e., some dummy players share the value as well. To deal with it, we refine \MES rule into the Non-Dummy Marginal Equal Share (\NDMES) rule.

\begin{tcolorbox}[title=Non-Dummy Marginal Equal Share (\NDMES) Rule,title filled]
	\textbf{Input}: $G=(N,v,\pi)$\\
	For each player $i$ in arriving order $\pi$, \\
	\tab Initialize $\phi(G,i)\leftarrow 0$. \\
	\tab Let $D_i \leftarrow$ dummy players in $G^i$ and $S_i \leftarrow N_{\pi \mid i} \setminus D_i$. \\
	\tab For each player $j$ in $S_i$: \\
	\tab[1cm] $\phi(G, j) \leftarrow \phi(G, j) + \frac{1}{|S_i|} \MC(G^i, N_{\pi \mid i}\setminus \{i\}, i) $.\\
	\textbf{Output}: $\phi(G)$.
\end{tcolorbox}

\begin{theorem}\label{NDMES_properties}
	\NDMES satisfies S-STAY, EA, PART, and OD.
\end{theorem}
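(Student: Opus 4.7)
The plan is to verify S-STAY, PART, and OD directly from the structure of \NDMES, and to handle EA by an adjacent-swap monotonicity argument analogous to, but more delicate than, the one that works for \MES. For PART, if $i$ is contributional then the set $T=N_{\pi\mid i}\setminus\{i\}$ witnesses $v(T\cup\{i\})>v(T)$, so $i\notin D_i$ and $i\in S_i$; hence $\phi(G^i,i)=\tfrac{1}{|S_i|}\MC(G^i,T,i)>0$. For OD, I first establish that dummyness is monotone in the prefix: if $j$ is dummy in $G^i$, then for any $k\leq i$ with $j\in N_{\pi\mid k}$ the inclusion $N_{\pi\mid k}\subseteq N_{\pi\mid i}$ makes $j$ dummy in $G^k$ as well, so $j\in D_k$ at every step from $j$'s arrival onwards and never receives a share, yielding $\phi(G^i,j)=0$. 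For S-STAY, the weak part STAY is immediate from the non-negativity of the per-step increments; the strict part uses that when $v(N_{\pi\mid i})>v(N_{\pi\mid i}\setminus\{j\})$, the set $N_{\pi\mid i}\setminus\{j\}$ is a witness of $j$'s non-dummyness in $G^i$, so $j\in S_i$ and $j$'s share at step $i$ equals $\tfrac{1}{|S_i|}\MC(G^i,N_{\pi\mid i}\setminus\{i\},i)>0$ because $i$ is contributional.

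For EA, by iterated adjacent swaps it suffices to show that swapping $i$ with the immediately-following arrival weakly decreases $\phi(\cdot,i)$. Fix $\pi=(z_1,\dots,z_{p-1},i,z_p,\dots,z_{n-1})$ and $\pi''=(z_1,\dots,z_{p-1},z_p,i,z_{p+1},\dots,z_{n-1})$, and set $A=v(\{z_1,\dots,z_{p-1}\})$, $B=v(\{z_1,\dots,z_{p-1},i\})$, $C=v(\{z_1,\dots,z_p\})$, $D=v(\{z_1,\dots,z_p,i\})$. All prefix sub-games agree outside steps $p,p+1$; the prefix at step $p+1$ is the common set $\{z_1,\dots,z_p,i\}$, so $|S_{p+1}|$ and $[i\in S_{p+1}]$ agree across $G$ and $G''$. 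A direct calculation yields
\[
\phi(G,i)-\phi(G'',i)=\frac{[i\in S_p^G]}{|S_p^G|}(B-A)+\frac{[i\in S_{p+1}]}{|S_{p+1}|}(C-B).
\]
If $B\leq C$ both summands are non-negative. If $B>C$, monotonicity of $v$ forces $A\leq C<B$, so $i$ is non-dummy in $G^p$ (via the witness $\{z_1,\dots,z_{p-1}\}$) and $[i\in S_p^G]=1$, reducing the goal to $\tfrac{B-A}{|S_p^G|}\geq\tfrac{B-C}{|S_{p+1}|}$. This follows from $B-A\geq B-C>0$ combined with $|S_p^G|\leq|S_{p+1}|$, where the latter uses that any subset of $N_{\pi\mid p}\setminus\{j\}$ witnessing non-dummyness of $j$ in $G^p$ remains a witness in $G^{p+1}$, giving $S_p^G\subseteq S_{p+1}$.

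The main obstacle is the EA case $B>C$, which can occur for non-submodular valuations; the numerator inequality alone does not settle the sign, and one must simultaneously invoke the denominator bound $|S_p^G|\leq|S_{p+1}|$ supplied by the prefix-monotonicity of non-dummyness. This is precisely the step that is absent in the \MES analysis and that makes the new dummy-exclusion rule compatible with, rather than hostile to, EA.
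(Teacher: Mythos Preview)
Your proofs of PART, OD, and S-STAY are correct and essentially coincide with the paper's arguments: the same witnesses are used to show membership in $S_i$ (for PART and the strict part of S-STAY), and the same ``dummyness is downward closed along prefixes'' observation drives OD.

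For EA, your approach is correct but genuinely different from the paper's. The paper treats an arbitrary delay of $i$ from position $i$ to some position after $j$ directly, splits into four cases according to whether $i$ is contributional before and after the delay, and in the nontrivial cases uses a telescoping bound $\sum_{k=i}^{j}\tfrac{1}{|S_k^1|}\MC(G_1^k,\cdot)\geq \tfrac{1}{|S_j^1|}\bigl(v(N_{\pi_1\mid j})-v(N_{\pi_1\mid i}\setminus\{i\})\bigr)$ together with the identity $S_j^1=S_i^2$ and monotonicity of $v$. Your reduction to adjacent swaps collapses all of this to a two-term comparison at steps $p$ and $p+1$; the only delicate case is $B>C$, which you handle with exactly the two facts that also power the paper's telescoping---the numerator bound $B-A\geq B-C$ (monotonicity of $v$) and the denominator bound $|S_p^G|\leq|S_{p+1}|$ (prefix-monotonicity of non-dummyness, i.e., $S_k\subseteq S_{k'}$ for $k\leq k'$). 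What you gain is a shorter, case-light argument that makes transparent precisely why excluding dummies is compatible with EA: the sharing set can only grow along the prefix, so the $1/|S|$ weights are nonincreasing in the arrival index. What the paper's direct argument buys is an explicit lower bound on $\phi(G_1,i)-\phi(G_2,i)$ for an arbitrary-length delay, which is a bit more informative quantitatively but not needed for the theorem.
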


By identifying all the dummy players in each prefix sub-game and equally sharing the value among non-dummy players, \NDMES satisfies OD axiom and maintains all other properties: S-STAY, EA, and PART. 

The primary challenge of \NDMES lies in the computation: identifying all the dummy players in each sub-game takes exponential time. However, for the subadditive valuation functions, we show that \NDMES is polynomial-time computable. 
\begin{proposition}\label{poly_time_NDMES_subadditive}
	Given an OCG $G=(N,v,\pi)$, if the valuation function $v(\cdot)$ is subadditive, \NDMES runs in poly-nominal time. 
\end{proposition}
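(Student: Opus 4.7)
The key observation is that under a subadditive (and monotone, normalized) valuation, the dummy-player test collapses from an exponential condition to a single-query check, which is precisely the step that makes \NDMES expensive in general. The plan is to first establish a characterization lemma and then read off the running time from the structure of the algorithm.

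The characterization I would prove is the following: for any $S \subseteq N$ and any player $j \in S$, the player $j$ is a dummy in the prefix sub-game restricted to $S$ if and only if $v(\{j\}) = 0$. The forward direction is immediate by taking the test coalition $T = \emptyset$ in the definition of dummy player and applying $v(\emptyset)=0$. For the reverse direction, assume $v(\{j\}) = 0$ and fix any $T \subseteq S \setminus \{j\}$; subadditivity gives $v(T \cup \{j\}) \leq v(T) + v(\{j\}) = v(T)$, while monotonicity gives $v(T \cup \{j\}) \geq v(T)$, so $v(T \cup \{j\}) = v(T)$ and $j$ is a dummy.

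Given this characterization, the algorithm is polynomial: at each arrival $i$ in the order $\pi$, the set $D_i$ of dummies in $G^i$ is computed by querying $v(\{j\})$ once for each $j \in N_{\pi \mid i}$, using at most $n$ oracle queries per step and $O(n^2)$ queries in total. The remaining operations per arrival (computing the single marginal $\MC(G^i, N_{\pi \mid i}\setminus\{i\}, i)$ via two value queries, and distributing it equally over $S_i = N_{\pi \mid i} \setminus D_i$) are clearly polynomial. This gives an overall polynomial-time algorithm.

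The main obstacle is not computational but conceptual: the dummy condition quantifies over all sub-coalitions, yet subadditivity is just strong enough to reduce it to a condition on singletons. It is worth noting that without subadditivity this reduction genuinely fails (a player with $v(\{j\})=0$ may still contribute positively to a larger coalition), which is why the general-valuation running time of \NDMES is not polynomial.
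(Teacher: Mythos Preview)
Your proposal is correct and matches the paper's own proof essentially line for line: both reduce the dummy test to the single check $v(\{j\})=0$ via subadditivity plus monotonicity for one direction and the choice $T=\emptyset$ for the other, and then conclude polynomial time immediately. Your explicit $O(n^2)$ query count is a small elaboration beyond what the paper writes, but the approach is identical.
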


\begin{proof}
	To identify all the dummy players in each prefix sub-game $G^i$, for every player $j\in N_{\pi \mid i}$, it is necessary to check every subset $S\subseteq N_{\pi \mid i}$ whether $v(S\cup\{j\})=v(S)$. When the valuation function is subadditive, we show $j$ is a dummy player in $G^i$ \textit{if and only if} $v(\{j\})=0$.\\ 
	$(\implies)$ Since $v(\cdot)$ is subadditive, for any subset $S$, $v(S\cup\{j\}) \leq v(S)+v(\{j\})=v(S)$. As $v(\cdot)$ is monotone, it means $v(S\cup\{j\})\geq v(S)$. Therefore, $v(S\cup\{j\})=v(S)$ for any subset $S$, i.e., $j$ is a dummy player.\\
	$(\impliedby)$ Assume $j$ is a dummy player, taking $S=\emptyset$, we have $v(\{i\})=v(\emptyset)=0$.
	
	So, the computation of dummy players in each sub-game $G^i$ is solvable in polynomial time as we only need to check whether $v(\{j\})=0$ for every $j\in N_{\pi \mid i}$. 
\end{proof}

\subsection{Desirable Rules: \ULMES and \DULMES}\label{subsection_ulmes_dulmes}

In light of the computational challenges associated with \NDMES rule, we devise a polynomial-time computable value-sharing rule, which we term it “Upward Lexicographic Marginal Equal Share” (\ULMES) rule.

\begin{tcolorbox}[title=Upward Lexicographic Marginal Equal Share (\ULMES) Rule,title filled]
	\textbf{Input}: $G=(N,v,\pi)$\\
	\textbf{For} each player $i$ in arriving order $\pi$: \\
	\tab Initialize $\phi(G,i)\leftarrow 0$, $S_i \leftarrow N_{\pi \mid i}, \ell \leftarrow |S_i|$. \\
	\tab \textbf{While} $\ell > 0$ \\
	\tab[1cm] \textbf{If} $v(S_i \setminus \{\pi(\ell)\} \cup \{i\}) = v(N_{\pi \mid i})$:\\
	\tab[1.5cm] Update $S_i\leftarrow S_i\setminus \{ \pi(\ell)\}$ \\
	\tab[1cm] $\ell \leftarrow \ell - 1$ \\
	\tab For $j\in S_i$, \\
	\tab[1cm] $\phi(G,j) \leftarrow \phi(G, j) + \frac{1}{|S_i|}\MC(G^i, N_{\pi \mid i}\setminus \{i\},i)$.\\
	\textbf{Output}: $\phi(G)$.
\end{tcolorbox}

For a prefix sub-game $G^i$, rather than distributing the value among all non-dummy players as in \NDMES rule, \ULMES allocates the value only to those non-dummy players who contribute to the new marginal value created by the newly arriving player.

\ULMES determines a set of non-dummy players $S_i$ to share the marginal value as follows: Initialize $S_i$ to be $N_{\pi \mid i}$. Check all agents in $N_{\pi \mid i} $ in an upward lexicographic order. Starting from player $i$ (i.e., $\pi(\ell)$), for player $\pi(\ell)$, if the coalition $S_i \setminus \{\pi(\ell)\} \cup \{i\}$ provides the same value as the grand coalition $N_{\pi \mid i}$, then agent $\pi(\ell)$ is considered dispensible for creating the marginal value and player $\pi(\ell)$ is removed from $S_i$. Repeat this procedure with $\ell \leftarrow \ell - 1$ until the list is exhausted. Intuitively, when player $i$ arrives, there can be multiple coalitions can provide the same marginal value and \ULMES selects \textbf{\textit{the player set $S_i$ such that the last arriving player is the earliest}} among all such coalitions providing the same marginal value in $G^i$.
\ULMES is polynomial-time computable as the computation of $S_i$ is in polynomial time. Also, it satisfies axioms including S-STAY, PART and OD in general.

\begin{theorem}\label{ULMES_properties}
	\ULMES satisfies S-STAY, PART, and OD.
\end{theorem}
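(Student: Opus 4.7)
My approach is to reduce all three axioms to a single structural lemma about the distribution set $S_i$ produced at arrival $i$. The lemma asserts: (a) $v(S_i) = v(N_{\pi\mid i})$; (b) $i \in S_i$ whenever $i$ is contributional; and (c) $S_i$ contains every player $j \in N_{\pi\mid i}$ that is \emph{critical} for $v(N_{\pi\mid i})$, i.e., satisfies $v(N_{\pi\mid i}) > v(N_{\pi\mid i}\setminus\{j\})$. Property (a) is a loop invariant: the update drops $\pi(\ell)$ only when the tested coalition $(S_i\setminus\{\pi(\ell)\})\cup\{i\}$ attains $v(N_{\pi\mid i})$, which keeps the scored set at full value. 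For (c) with $j \neq i$, monotonicity gives $v((S_i\setminus\{j\})\cup\{i\}) \leq v(N_{\pi\mid i}\setminus\{j\}) < v(N_{\pi\mid i})$, so the removal test fails at iteration $\pi(\ell)=j$ and $j$ survives every iteration; the $\cup\{i\}$ clause serves precisely to anchor $i$ in the tested coalition, which yields (b) by the analogous criticality argument applied to $i$ itself.

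With the lemma in place, each axiom falls out in one step. For PART, if $i$ is contributional then $\MC(G^i, N_{\pi\mid i}\setminus\{i\}, i) > 0$ and (b) gives $i \in S_i$, so $\phi(G^i, i) \geq \frac{1}{|S_i|}\MC(G^i, N_{\pi\mid i}\setminus\{i\}, i) > 0$. STAY is immediate since every iteration only adds non-negative increments. For the S-STAY strengthening, when $i$ is contributional and $j \prec_\pi i$ is critical for $N_{\pi\mid i}$, clause (c) forces $j \in S_i$, so step $i$ adds the strictly positive amount $\frac{1}{|S_i|}\MC(G^i, N_{\pi\mid i}\setminus\{i\}, i)$ to $\phi(G, j)$, giving $\phi(G^i, j) > \phi(G^j, j)$. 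For OD, observe that a dummy player $j$ in $G^i$ remains dummy in every earlier prefix sub-game $G^{i'}$ with $j \in N_{\pi\mid i'}$, since the identity $v(T\cup\{j\}) = v(T)$ restricts to subsets. Combining this identity with invariant (a), the removal test at iteration $\pi(\ell)=j$ during the processing of $i'$ rewrites as $v((S_{i'}\setminus\{j\})\cup\{i'\}) = v(((S_{i'}\setminus\{j\})\cup\{i'\})\cup\{j\}) = v(S_{i'}\cup\{i'\}) = v(N_{\pi\mid i'})$ and therefore succeeds, so $j$ is stripped from every $S_{i'}$ and hence $\phi(G^i, j) = 0$.

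The main obstacle is pinning down the treatment of $i$ inside the while-loop: a naive reading of the iteration $\pi(\ell) = i$ would erase $i$ from $S_i$ at the very first step, which would simultaneously break PART and clause (b) of the lemma. The correct interpretation is that the $\cup\{i\}$ in the removal condition keeps $i$ anchored in the scored coalition, so $i$ persists in the final distribution set; this is what permits the critical-player argument to be applied uniformly, including to $j = i$. Once this subtlety is settled, every remaining step is a short application of monotonicity together with either criticality (for PART and the strong clause of S-STAY) or the dummy identity (for OD).
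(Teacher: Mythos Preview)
Your proof is correct and follows essentially the same route as the paper: the monotonicity bound behind clause~(c) is exactly the paper's S-STAY argument, clause~(b) matches the paper's one-line PART justification that a contributional $i$ is retained in $S_i$, and your OD argument coincides with the paper's dummy-elimination step (you additionally make explicit that dummyness in $G^i$ propagates to every earlier prefix sub-game $G^{i'}$, which the paper leaves implicit but is needed to conclude $\phi(G^i,j)=0$). Packaging these three facts into a single structural lemma about $S_i$ is a clean presentational choice but not a different method.
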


Unfortunately, \ULMES does not satisfy EA in general. The following proposition explains why it fails to meet the EA criterion with general valuation functions.
\begin{proposition}\label{ULMES_NOT_EA}
	\ULMES does not satisfy EA in general.
\end{proposition}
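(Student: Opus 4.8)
The plan is to refute EA by exhibiting a single explicit counterexample: an OCG on four players together with two arrival orders that differ only in the position of one tracked player, for which \ULMES strictly rewards the later arrival. Concretely, I would take $N=\{1,2,3,4\}$ and define a monotone, normalized general valuation $v$ whose only nonzero values are $v(\{2,4\})=1$, $v(\{1,2,3\})=0.8$, $v(\{1,2,4\})=v(\{2,3,4\})=1$, and $v(N)=1$ (all other coalitions, in particular every singleton and every pair other than $\{2,4\}$, have value $0$). Monotonicity is immediate to verify. I track player $2$ and compare $\pi=(1,2,3,4)$ with $\pi'=(1,3,4,2)$; these satisfy the EA hypothesis since $\pi_{\mid N\setminus\{2\}}=\pi'_{\mid N\setminus\{2\}}=(1,3,4)$ and $N_{\pi\mid 2}=\{1,2\}\subset\{1,2,3,4\}=N_{\pi'\mid 2}$, so EA would demand $\phi(G,2)\ge\phi(G',2)$.

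Next I would run \ULMES on both instances, tracing the set $S_i$ produced by the upward-lexicographic pruning at each contributional arrival (with the arriving player always retained, as required by PART). Under $\pi$, players $1$ and $2$ create no marginal value, then $3$ triggers the jump $v(\{1,2,3\})-v(\{1,2\})=0.8$; since no pair inside $\{1,2,3\}$ has positive value, pruning removes nobody and the $0.8$ is split among all three, giving player $2$ a share $0.8/3$. Finally $4$ triggers $v(N)-v(\{1,2,3\})=0.2$; here pruning strips $3$ (because $v(\{1,2,4\})=1$) and then $1$ (because $v(\{2,4\})=1$), leaving $S=\{2,4\}$, so player $2$ receives $0.2/2=0.1$, for a total $\phi(G,2)=0.8/3+0.1$. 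Under $\pi'$, the prefixes $\{1\},\{1,3\},\{1,3,4\}$ all have value $0$, so the entire value appears as a single jump of size $1$ when player $2$ arrives last; the greedy pruning keeps $4$ (since $v(\{1,2,3\})=0.8\ne 1$) but discards $3$ and then $1$ (since $v(\{1,2,4\})=1$ and $v(\{2,4\})=1$), yielding $S=\{2,4\}$ and $\phi(G',2)=1/2$.

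Comparing totals gives $\phi(G,2)=0.8/3+0.1\approx 0.367<0.5=\phi(G',2)$, so delaying strictly benefits player $2$ and EA fails. The conceptual point driving the construction — and the main obstacle in finding it — is to decouple two value levels: the partial value $0.8$ must be attainable only through the size-three coalition $\{1,2,3\}$, so that arriving early forces player $2$ to share it with two others, while the full value $1$ must be attainable through the size-two coalition $\{2,4\}$ whose second member $4$ arrives late enough that, in $\pi'$, player $2$ herself triggers the grand jump and the order-sensitive pruning collapses the sharing set down to $\{2,4\}$. The delicate part is choosing $v$ so that monotonicity holds, the intermediate prefixes in $\pi'$ contribute nothing (so the whole value becomes one concentrated jump for the late arrival), and the pruning selects the small set $\{2,4\}$ when $2$ is late but the diluting size-three set when $2$ is early; once such a $v$ is pinned down, the remaining computation is routine.
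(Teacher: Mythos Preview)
Your proposal is correct and follows essentially the same approach as the paper: both exhibit a four-player OCG in which the tracked player, by arriving early, is forced to share an intermediate marginal value among three players, whereas by delaying she triggers the full grand-coalition value in one jump and the upward-lexicographic pruning collapses the sharing set to just two players. The paper's instance is parametric (with values $x<y$, tracking player $3$ who swaps with player $4$) while yours uses concrete numbers and tracks player $2$ delaying past two positions, but the mechanism driving the EA violation is identical.
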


\begin{proof}
	In the following example, \ULMES fails to satisfy EA, $G_1=(N,v,\pi_1)$, $N=\{1,2,3,4\}$, $\pi=(1,2,3,4)$ and $G_2=(N,v,\pi_2)$ where $\pi_2=(1,2,4,3)$. The coalition valuations are as follows:
	\[
	\begin{aligned}
		& v(\{1\})=v(\{2\})=v(\{3\})=v(\{4\}) = 0;\\
		& v(\{1,2\}) = v(\{1,4\})=v(\{2,4\}) =v(\{1,2,4\}) =0;\\
		& 0 \leq v(\{1,3\}) \leq v(\{2,3\}) < v(\{1,2,3\}) = x \, (x > 0);\\
		& v(\{3,4\}) = v(\{1,3,4\})=v(\{2,3,4\})=v(\{1,2,3,4\})=y;\\
	\end{aligned}
	\]
	Let $y > x$, now we focus on the valued shared by player $3$ in $G_1$ and $G_2$. For $G_1$, when player $3$ arrives, player $1,2,3$ share $x$ equally, $\phi(G_1,3)=\frac{x}{3}$, when player $4$ comes, the new marginal value $y-x$ is shared by $3$ and $4$: $\phi(G_1,3)=\frac{x}{3} + \frac{y-x}{2}$. However, in $G_2$, when $3$ delays her arrival, $1,2,4$ share no value as $v(\{1,2,4\})=0$. When $3$ arrives in $\pi_2$, the value $y$ is only shared by $3$ and $4$, in which $\phi(G_2,3)=\frac{y}{2}$. In this case, $\phi(G_2,3)=\frac{y}{2} > (\frac{x}{3}+\frac{y-x}{2})=\phi(G_1,3)$. \ULMES fails to satisfy EA.
\end{proof}

Intuitively, \ULMES fails to satisfy EA in general because a player might choose to delay their arrival to become a contributional player who shares a larger marginal value with fewer players. Although this player might forfeit some shared value from previous timesteps, the potential gain from the new marginal value can outweigh the losses, thereby undermining the EA property.

Although \ULMES fails to satisfy the EA property in general, we next show that it adheres to the EA axiom within every simple online cooperative games (SOCG).

\begin{lemma}\label{ULMES_EA_SIMPLE}
	\ULMES satisfies EA for every SOCG.
\end{lemma}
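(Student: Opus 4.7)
The plan is to prove $\phi(G, i) \geq \phi(G', i)$ by reducing to adjacent transpositions and then exploiting the collapsed structure of simple online cooperative games. Whenever $i$ moves strictly later in $\pi'$ than in $\pi$ while the relative order of $N \setminus \{i\}$ is preserved, the two permutations can be connected by a sequence of adjacent swaps of $i$ with its immediate successor, so by transitivity of $\geq$ it suffices to prove the inequality in the case where $\pi$ and $\pi'$ differ only at positions $k$ and $k+1$, with $\pi(k)=i, \pi(k+1)=j$ and $\pi'(k)=j, \pi'(k+1)=i$. Write $P = \{\pi(1), \ldots, \pi(k-1)\}$ for the common prefix before position $k$. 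The SOCG hypothesis collapses the analysis: since $v$ is $\{0,1\}$-valued and monotone on prefixes, there is a unique \emph{pivotal step} $t^*$ at which the prefix value jumps from $0$ to $1$, and every other step contributes zero marginal, so each player's total payoff is either $0$ (if she is absent from $S_{t^*}$) or $1/|S_{t^*}|$ (if she is present). The entire argument reduces to comparing $i$'s membership in $S_{t^*}(\pi)$ versus $S_{t^*}(\pi')$ and, when $i$ belongs to both, comparing the two sizes.

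I would case-split on $(v(P), v(P \cup \{i\}), v(P \cup \{j\}), v(P \cup \{i, j\}))$, each in $\{0, 1\}$ and constrained by monotonicity. If $v(P) = 1$ the pivotal step lies strictly before $k$ and is identical in both games with $i$ absent, so both payoffs are $0$. When $v(P) = 0$, the remaining triple determines whether the pivotal step is at $k$, $k+1$, or later. In every sub-case where the pivotal step is at $k$ or $k+1$ in at least one of the games, a direct trace of \ULMES shows that the two runs arrive at the same working set by the end of iteration $\ell = k$, so the final $S_{t^*}$'s coincide and payoffs are equal, with the single exception $v(P \cup \{i\}) = v(P \cup \{j\}) = 1$: here the pivotal step is $k$ in both games, $i$ has not yet arrived by position $k$ in $\pi'$, and $i \in S_k(\pi)$ while $i \notin S_k(\pi')$ gives a strict gain to $i$ in $\pi$.

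The main obstacle is the remaining case $v(P \cup \{i\}) = v(P \cup \{j\}) = v(P \cup \{i, j\}) = 0$, where the pivotal step $t^* \geq k+2$ and the full prefix at that step coincide between the two games, yet \ULMES can still distinguish $\pi$ from $\pi'$ because the loop processes positions $k$ and $k+1$ in opposite orders. Iterations $\ell \geq k+2$ are identical in both runs, producing a common working set $S^*$ at the start of iteration $\ell = k+1$. Let $a = [v(S^* \setminus \{i\}) = 1]$, $b = [v(S^* \setminus \{j\}) = 1]$, $c = [v(S^* \setminus \{i, j\}) = 1]$; monotonicity gives $c \Rightarrow a \wedge b$. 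A routine enumeration of the five admissible $(a,b,c)$ profiles shows that only the profile $a \wedge b \wedge \neg c$ causes the two runs to diverge: \ULMES on $\pi$ removes $j$ at $\ell = k+1$ (by $b$) and then fails to remove $i$ at $\ell = k$ (by $\neg c$), retaining $i$; \ULMES on $\pi'$ symmetrically retains $j$ and excludes $i$. Iterations for $\ell < k$ never touch $i$ or $j$, so the asymmetry persists into the final sets, giving $i \in S_{t^*}(\pi) \setminus S_{t^*}(\pi')$ and thus $\phi(G, i) = 1/|S_{t^*}(\pi)| > 0 = \phi(G', i)$. The conceptual punch line is that \ULMES's reverse-arrival removal rule systematically prefers dropping late-arriving players from $S_{t^*}$, which is exactly the mechanism that makes delaying $i$'s arrival (weakly) hurt $i$ in every SOCG.
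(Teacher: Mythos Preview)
Your proof is correct and takes a genuinely different route from the paper's. The paper argues directly for an arbitrary delay of player $i$ (from its position in $\pi_1$ to some later position in $\pi_2$) and case-splits on whether the pivotal player $q$ in $\pi_1$ arrives strictly before $i$, equals $i$, or arrives strictly after $i$; each case then spawns further sub-cases according to whether $i$ belongs to the surviving set $S_q^1$, whether multiple value-$1$ coalitions exist, and whether $i$ retains or loses pivotality after the delay. Your argument instead reduces to adjacent transpositions, which localizes the entire analysis to two positions $k,k+1$ and turns the proof into a finite enumeration of the four monotone profiles of $(v(P),v(P\cup\{i\}),v(P\cup\{j\}),v(P\cup\{i,j\}))$, followed by the five admissible $(a,b,c)$ profiles in the residual case. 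The key observation that ULMES's reverse-order removal is identical in both runs for $\ell\ge k+2$ and $\ell<k$, and that only the profile $a\wedge b\wedge\neg c$ causes divergence (in $i$'s favour under $\pi$), is exactly right; the sets $S_{t^*}(\pi)$ and $S_{t^*}(\pi')$ may differ on $P$ in that sub-case, but your conclusion only needs $i$'s membership, which is preserved since iterations $\ell<k$ never touch $i$ or $j$.

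What each approach buys: your adjacent-swap reduction is more elementary and mechanical, avoiding the paper's somewhat informal reasoning about ``multiple coalitions creating value $1$'' and elimination order across a long delay interval. The paper's global argument, on the other hand, makes the structural reason for EA more visible in one shot (the surviving set $S_q^1$ is unaffected by $i$'s delay whenever $i\in S_q^1$), at the cost of a bushier case tree. Both are valid; yours would be easier to verify line by line.
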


\begin{proof}
	Consider an SOCG $G_1=(N,v,\pi_1)$ where $\pi_1 = (1, 2, \ldots, i-1, \textbf{i}, i+1, \ldots, n)$. Let $q$ denote the pivotal player in $G_1$, that is, in sub-game $G^q$, $v(N_{\pi_1\mid q} \setminus \{q\})=0$ and $v(N_{\pi_1\mid q})=1$. Let $G_2=(N,v,\pi_2)$, where $\pi_2=(1, 2, \ldots, i-1, i+1, \ldots,j,\textbf{i},\ldots, n)$ represents the arriving order in which all other players' arriving orders are fixed and $i$ delays her arrival. For $i$, there are three cases: $q \prec_{\pi_1} i$,  $q = i$ and $i \prec_{\pi_1} q$. 
	
	\noindent \textbf{Case 1: $q \prec_{\pi_1} i$}. 
	If a coalition with value $1$ has formed before $i$’s arrival in $\pi_1$, $i$ shares no value in both $\pi_1$ and $\pi_2$ and $\phi(G_1,i) = \phi(G_2,i) = 0$. 
	
	\noindent \textbf{Case 2: $q = i$}. If $i$ is the pivotal player in $G_1$, there are two possible cases if $i$ delays her arrival.\\
	\textbf{(a).} $i$ loses her pivotal role in $\pi_2$, i.e., some player in $\{i+1, i+2, \ldots, j\}$ becomes the pivotal player. Then, $\phi(G_1,i) > \phi(G_2,i) = 0$. Player $i$ has no incentive to delay. \\
	\textbf{(b).} $i$ remains to be pivotal in $\pi_2$. Let $S_i^1$ be the player set sharing value $1$ in $G_1$. When $i$ delays in $\pi_2$, there will be no change of $S_i^1\setminus \{i\}$ which creates the new marginal value along with $i$. Thus, \ULMES will eliminate all the players in $\{i+1, \ldots, j\}$ (because of the existence of $S_i^1\setminus\{i\}$) in $G_2$. Hence, players in $S_i^1$ still share the value $1$ in $G_2$. Player $i$ receives the same value in $G_1$ and $G_2$. Therefore, $i$ has no incentive to delay in case 2.
	
	\noindent \textbf{Case 3: $i \prec_{\pi_1} q$} Denote $S_q^1$ as the set of players among whom the value $1$ is shared. There are two possible cases. \\
	\textbf{(a).} $i \notin S_q^1$, i.e., $\phi(G_1,i)=0$. $i \notin S_q^1$ is \textit{either} because there exists no coalition including $i$ along with $q$ creating value $1$ \textit{or} because $i$ is in some coalition creating value $1$ with $q$, however, eliminated by \ULMES rule. In the former case, \textbf{i)} $i$ delays between $i+1$ and $q$, there is still no coalition including $i$ can creating value $1$ with $q$; \textbf{ii)} $i$ delays after $q$'s arrival, which makes no change for the value sharing as $i$'s delay does not influence $S_q^1$; For the latter case, it implies that among players $\{1,2,\ldots, i,\ldots, q-1\}$, there are multiple coalitions with $q$ creating value $1$ and $i$ is in one of these coalitions, denoting it by $\bar{S}_q^1$. However, $\bar{S}_q^1$ is eliminated because of the existence of $S_q^1$. Since $i \notin S_q^1$, $i$'s any delay strategy in $\pi_2$ has no effect on $S_q^1$, keeping $\phi(G_2,i)=\phi(G_1,i)=0$.\\
	\textbf{(b).} $i \in S_q^1$, i.e., $\phi(G_1,i)=\frac{1}{|S_q^1|}$. There are two different situtations:
	\textbf{i)} $S_q^1$ is the unique coalition creating value $1$ in $G^q$. If $i$ delays between $i+1$ and $q$, it does not change value sharing in $\pi_2$ and $\phi(G_2,i)=\phi(G_1,i)=\frac{1}{|S_q^1|}$; If $i$ delays the arrival after $q$, one case is some players in $\pi_2$ between $q$ and $i$, along with some players in $\{1,2,\ldots, i-1,i+1,\ldots, q\}$ construct a coalition with value $1$, then $\phi(G_2,i)=0$; the other case is that after $i$'s delay, $i$ becomes the pivotal player. However, $i$ will still share the value in $S_q^1$ as all the other such coalitions creating value $1$ will be eliminated by \ULMES because of the existence of $S_q^1$. Therefore, $\phi(G_2,i)=\phi(G_1,i)=\frac{1}{|S_q^1|}$. 
	\textbf{ii)} there are multiple coalitions including some players in $\{1,2,\ldots, q-1\}$ creating value $1$ with $q$ and $S_q^1$ is the coalition surrives in \ULMES rule. If $i$ delays her arrival between $i+1$ and $q$, it could be either $S_q^1$ is still the coalition to share the value ($\phi(G_2,i)=\phi(G_1,i)=\frac{1}{|S_q^1|}$) or because of $i$'s delay, $S_q^1$ get eliminated in \ULMES rule, making some other coalition surrvies ($\phi(G_1,i) > \phi(G_2,i) = 0$); If $i$ delays her arrival after $q$, $\phi(G_2,i)=0$ as there exists some other coalition creating and sharing the value $1$.
\end{proof}

As previously mentioned, the greedy monotone (GM) decomposition method proposed by \citet{GZZ+24a} extends \RFC to \eRFC without compromising axiomatic satisfaction. By leveraging the GM decomposition, we extend our \ULMES to a new rule called \DULMES. Details of the GM decomposition are provided in the appendix

\begin{tcolorbox}[title=extended ULMES (\DULMES) Rule, title filled]
	\textbf{Input}: $G=(N,v,\pi)$\\
	$\forall\, i\in N$, initialize $\phi(G,i)\leftarrow 0$.\\
	Decomposition $D(G) \leftarrow \mathtt{GM ~ Decomposition}(G)$.\\
	For each component $(c, \bar{G})$ in $D(G)$: \\
	\tab $(\phi(\bar{G}, 1),\ldots, \phi(\bar{G}, n)) \leftarrow$ \ULMES($\bar{G}$) \\
	\tab For each player $i \in N$:\\
	\tab[1cm] $\phi(G,i) \leftarrow \phi(G,i) + c \cdot \phi(\bar{G},i)$.\\
	\textbf{Output}: $\phi(G)$.
\end{tcolorbox}

GM decomposition method takes the OCG $G$ as input and outputs a linear combination of component, denoted as $D(G)$. For each component $\{(c,\bar{G})\}$, $c$ is the coefficient while $\bar{G}$ is an SOCG. \DULMES runs \ULMES in each SOCG and aggreates the weighted outcomes of \ULMES as the output.

\begin{theorem}\label{DULMES_properties}
	\DULMES satisfies S-STAY, EA, PART, and OD.
\end{theorem}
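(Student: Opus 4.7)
The plan is to exploit the linearity of the GM decomposition. Given $G=(N,v,\pi)$, GM produces $v=\sum_k c_k\bar v_k$ with $c_k>0$ and each $\bar v_k$ a monotone $0$--$1$ valuation, so every component $\bar G_k=(N,\bar v_k,\pi)$ is an SOCG; by construction $\phi_{\DULMES}(G,i)=\sum_k c_k\,\phi_{\ULMES}(\bar G_k,i)$, and the same identity holds on every prefix sub-game. Since \Cref{ULMES_properties} gives ULMES the axioms S-STAY, PART, and OD on general games and \Cref{ULMES_EA_SIMPLE} gives it EA on every SOCG, each summand already satisfies the four target axioms; what remains is to show that a positive linear combination preserves each of them.

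Three of the four axioms are essentially inherited term by term. For OD, I would observe that a dummy player $j$ in $G^i$ is a dummy in every $\bar G_k^i$ (by $c_k>0$ and monotonicity of each $\bar v_k$), so ULMES contributes $0$ from every component. STAY (the non-strict half of S-STAY) and EA are both defined by non-strict inequalities of the form $\phi(A,i)\ge\phi(B,i)$, which are closed under non-negative linear combinations; in the EA case I would explicitly invoke \Cref{ULMES_EA_SIMPLE} to cover each SOCG summand. For PART, if $i$ is contributional in $G$, decomposing $v(N_{\pi\mid i})-v(N_{\pi\mid i}\setminus\{i\})=\sum_k c_k(\bar v_k(N_{\pi\mid i})-\bar v_k(N_{\pi\mid i}\setminus\{i\}))>0$ forces at least one component $\bar G_{k^\ast}$ in which $i$ is pivotal, and PART of ULMES in that component yields $\phi_{\ULMES}(\bar G_{k^\ast}^i,i)>0$; positivity then survives the aggregation.

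The delicate step, and the one I expect to be the main obstacle, is the strict-inequality half of S-STAY. Given $i$ contributional in $G$ and $j\prec_\pi i$ with $v(N_{\pi\mid i})>v(N_{\pi\mid i}\setminus\{j\})$, I need some component to deliver a strictly positive increment to $j$ between $G^j$ and $G^i$. Decomposing the essentiality condition yields a component $\bar G_m$ with $\bar v_m(N_{\pi\mid i})=1$ and $\bar v_m(N_{\pi\mid i}\setminus\{j\})=0$; monotonicity then pins its pivotal player $p$ to satisfy $j\preceq_\pi p\preceq_\pi i$, and $j$ remains essential at every prefix $N_{\pi\mid s}$ with $j\preceq_\pi s\preceq_\pi p$, so ULMES includes $j$ in the redistribution set $S_p$ and assigns $j$ the positive share $1/|S_p|$. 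The subtle corner case is $p=j$, where that share is already booked into $\phi(G^j,j)$ and so $\bar G_m$ alone gives no strict gain; to handle it I would use the contributional hypothesis on $i$ to produce a \emph{different} component in which $i$ itself is pivotal and $j$ is simultaneously essential, combining $v(N_{\pi\mid i})-v(N_{\pi\mid i}\setminus\{i\})>0$ and $v(N_{\pi\mid i})-v(N_{\pi\mid i}\setminus\{j\})>0$ with the greedy structure of GM to rule out the obstructing configurations. Once such a component is produced, the strictly positive ULMES increment in it, added to the non-negative contributions of every other summand (STAY), yields $\phi(G^i,j)>\phi(G^j,j)$ and completes the proof.
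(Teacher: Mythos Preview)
Your proposal is correct and matches the paper's approach: lift each axiom componentwise from \ULMES via the positive linear structure of the GM decomposition, invoking \Cref{ULMES_properties} for S-STAY/PART/OD and \Cref{ULMES_EA_SIMPLE} for EA on each SOCG summand. On S-STAY you are actually more careful than the paper---it simply asserts the existence of a component in which $i$ is contributional and $j$ is simultaneously essential, whereas you correctly flag this as the crux and point to the greedy (nested) structure of GM as the reason; for EA, add the explicit observation (which the paper makes) that GM depends only on $v$ and not on $\pi$, so that $G$ and the delayed game $G'$ share the same component list.
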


To better compare \ULMES and \DULMES rules, we introduce the following example with $4$ players in which \ULMES rule fails to satisfy EA while \DULMES rule satisfies the axiom. 

\begin{example}[\ULMES and \DULMES rules]\label{example:rules3}
	Consider an OCG $G=(N,v,\pi)$ where $N=\{1,2,3,4\}$, $\pi=(1,2,3,4)$. For the valuation function $v(\cdot)$, all the coalition valuation are enumerated in the first row of \Cref{example_decompose_game}. 
	\begin{table*}[!htbp]
		\centering
		\resizebox{1\textwidth}{!}{
			\begin{tabular}{ccccccccccccccccc}
				\toprule
				Coalition & Coef & $\{1\}$ & $\{2\}$ & $\{3\}$ & $\{4\}$ & $\{1,2\}$ & $\{1,3\}$ & $\{1,4\}$ & $\{2,3\}$ & $\{2,4\}$    & $\{3,4\}$ & $\{1,2,3\}$ & $\{1,2,4\}$ & $\{1,3,4\}$ & $\{2,3,4\}$ & $\{1,2,3,4\}$ \\ \midrule
				Valuation  & & $0$     & $0$     & $0$     & $0$     & $0$       & $1$       & $0$      & $1$   & $0$       & $3$       & $2$         & $0$         & $3$         & $3$ & $3$    \\ \midrule
				$\bar{v}_1(\cdot)$ &  $c_1=1$  & $0$     & $0$     & $0$     & $0$     & $0$       & $1$       & $0$       & $1$       & $0$       & $1$       & $1$         & $0$         & $1$         & $1$ & $1$         \\ \midrule
				$\bar{v}_2(\cdot)$ &  $c_2=1$     & $0$     & $0$     & $0$     & $0$     & $0$      & $0$       & $0$       & $0$       & $0$      & $1$       & $1 $        & $0$         & $1$         & $1$ & $1$         \\ \midrule
				$\bar{v}_3(\cdot)$ &  $c_3=1$     & $0$     & $0$     & $0$     & $0$     & $0$       & $0$       & $0$       & $0$       & $0$       & $1$       & $0$         & $0 $        & $1$         & $1$   & $1$      \\ \bottomrule
		\end{tabular}}
		\caption{Coalition valuation and decomposition of the OCG $G$ with $3$ components $(1,\bar{G}_1),(1,\bar{G}_2),(1,\bar{G}_3)$.}
		\label{example_decompose_game}  
	\end{table*}
	
	Under the \ULMES rule, during the first two timesteps, players $1$ and $2$ arrive but generate no value. When player $3$ joins, the coalition $\{1,2,3\}$ generates a value of $2$, and no other sub-coalition achieves this value. Consequently, the three agents share the total value equally, each receiving $\frac{3}{2}$. Upon the arrival of player $4$, the grand coalition achieves a total value of $3$, with a marginal contribution of $1$. To determine $S_4$, we start from player $4$. Removing player $4$ reduces the value to $2$, indicating that player $4$ must be included in $S_4$ Similarly, player $3$ must also be in $S_4$ since $v(\{1,2,4\})=0$. Conversely, player $1$ and $2$ are excluded from $S_4$ since $v(\{1,3,4\})=v(\{3,4\})=3$. Hence, the marginal contributional value of $1$ is evenly distributed between player $3$ and $4$, each receiving $\frac{1}{2}$. The final value distribution is $\phi(G)=(\frac{2}{3}, \frac{2}{3},\frac{7}{6},\frac{1}{2})$. Now consider the OCG $G'$, where player $3$ delays her arrival, changing the arriving order from $\pi=(1,2,3,4)$ to $\pi'=(1,2,4,3)$. In this scenario, player $3$ strategically delays her arrival to ensure that she shares the entire value of $3$ solely with player $4$, resulting in each receiving $\frac{3}{2}$. \Cref{ULMES_output_example} summarizes the outcomes of \ULMES rule under $G$ and $G'$, illustrating a case where \ULMES rule fails to satisfy the EA axiom.
	\begin{table}[!htbp]
		\centering
		\scalebox{1}{%
			\begin{tabular}{ccccc}
				\toprule
				$\phi(G,i)$            & $1$    & $2$     & $3$     & $4$     \\ \midrule
				$\pi=(1,2,3,4)$        & $2/3$ & $2/3$ & $\mathbf{7/6}$ & $1/2$ \\ \midrule
				$\pi^\prime=(1,2,4,3)$ & $0$   & $0$   & $\mathbf{3/2}$ & $3/2$ \\ \midrule
			\end{tabular}
		}
		\caption{\ULMES rule outcome in $G$ and $G'$}
		\label{ULMES_output_example}
	\end{table}
	
	Regarding \DULMES rule, firstly, since the GM decomposition is independent with the arriving order, OCG $G$ and $G'$ share the same game decomposition components shown in \Cref{example_decompose_game}, i.e., decomposing $G$ into three components $D(G)=\{(1,\bar{G}_1),(1,\bar{G}_2),(1,\bar{G}_3)\}$ with three distinct valuation functions $\bar{v}_1(\cdot), \bar{v}_2(\cdot)$, and $\bar{v}_3(\cdot)$. After the decomposition, we run \ULMES rule in each decomposed component in $G$ and $G'$, respectively. Taking $G$ as the example, under the arriving order $\pi=(1,2,3,4)$, for $\bar{G}_1$, player $1$ and $2$ arrive with no value sharing, when player $3$ comes, we compute $S_3=\{1,3\}$ since $\bar{v}_1(\{1,3\})=1$, then player $1$ and $3$ each shares $\frac{1}{2}$. Similarly, we can compute the value distribution in $\bar{G}_2$ and $\bar{G}_3$. The results are demostrated in \Cref{eulmes_outcome_g}.
	
	\begin{table}[!htbp]
		\centering
		\begin{tabular}{ccccc}
			\toprule
			$\phi(G,i)$ & $1$      & $2$     & $3$     & $4$     \\ \midrule
			$\bar{G}_1$       & $1/2$  & $0$   & $1/2$ & $0$   \\ \midrule
			$\bar{G}_2$       & $1/3$  & $1/3$ & $1/3$ & $0$   \\ \midrule
			$\bar{G}_3$       & $0$    & $0$   & $1/2$ & $1/2$ \\ \midrule
			$G$         & $5/6$  & $1/3$ & $4/3$ & $1/2$ \\ \bottomrule
		\end{tabular}
		\caption{\DULMES rule outcome in $G$}
		\label{eulmes_outcome_g}
	\end{table}
	
	When player $3$ delays her arrival in $G'$, \DULMES rule outputs the following results shown in \Cref{comparison_de_ulmes}.
	\begin{table}[!htbp]
		\centering
		\scalebox{1}{%
			\begin{tabular}{ccccc}
				\toprule
				$\phi(G^\prime,i)$  & $1$      & $2$     & $4$       & $3$   \\ \midrule
				$\bar{G}'_1$    & $1/2$  & $0$   & $0$   & $1/2$ \\ \midrule
				$\bar{G}'_2$      & $1/3$  & $1/3$ & $0$   & $1/3$ \\ \midrule
				$\bar{G}'_3$     & $0$    & $0$   & $1/2$   & $1/2$ \\ \midrule
				$G^\prime$  &  $5/6$  & $1/3$ & $1/2$ & $4/3$ \\ \bottomrule
			\end{tabular}
		}
		\caption{\DULMES rule outcome in $G^\prime$}
		\label{comparison_de_ulmes}
	\end{table}
	For the three components with $\bar{G}_1, \bar{G}_2$, and $\bar{G}_3$, we omit the coefficiency for each SOCG as their coefficients are all the same $1$ in this example. It is not hard to see that player $3$ has no incentive to delay her arrival when we apply \DULMES rule.
\end{example}

\section{Individual Rationality and Superadditivity} \label{individual_rational_section}
In the previous sections, we explored participation incentives in terms of EA, S-STAY, and PART. Now, we turn our attention to the individual rationality (IR) axiom. The IR axiom captures the participation incentive by stipulating that a player will only join the grand coalition if the value they receive from the coalition exceeds their singleton valuation.

We first present an impossibility result regarding the satisfaction of the IR axiom in online cooperative games.
\begin{proposition}[Impossibility]
	There is no value sharing rule satisfying IR for OCG with general valuation.
\end{proposition}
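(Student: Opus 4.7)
The plan is to prove this impossibility by constructing a small explicit counterexample that exploits the tension between individual rationality and the efficiency constraint $\sum_{i \in N} \phi(G,i) = v(N)$ that every online value-sharing rule must satisfy by definition. Since the excerpt permits arbitrary monotone normalized valuations (not necessarily superadditive), we can freely set $v(\{i\})$ close to $v(N)$ for every $i$, creating a budget violation.

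The concrete instance I would use is an OCG $G=(N,v,\pi)$ with $N=\{1,2\}$ and arbitrary permutation $\pi$, setting
\[
v(\emptyset)=0,\qquad v(\{1\})=v(\{2\})=1,\qquad v(\{1,2\})=1.
\]
This $v$ is normalized and monotone (the only non-trivial comparison, $v(\{1,2\})\geq v(\{i\})$, holds with equality), so it is a valid valuation in the general-valuation domain. Any online value-sharing rule $\phi$ must satisfy the feasibility condition $\phi(G,1)+\phi(G,2)=v(N)=1$, together with $\phi(G,i)\geq 0$. However, IR demands $\phi(G,1)\geq v(\{1\})=1$ and $\phi(G,2)\geq v(\{2\})=1$, which would force $\phi(G,1)+\phi(G,2)\geq 2>1=v(N)$, contradicting feasibility. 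Hence no rule can satisfy IR on this instance.

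The argument has essentially no obstacle: it is a one-line efficiency-vs-IR accounting that has long been standard in cooperative game theory, and the only care needed is to verify that our example is a legal OCG under the paper's setup, namely that $v$ is normalized and monotone. The counterexample also clarifies \emph{why} the authors restrict the IR discussion to superadditive valuations in the next section: superadditivity of $v$ implies $v(N)\geq \sum_{i}v(\{i\})$, which is exactly the budget condition that makes IR potentially achievable. If desired, the same construction generalizes to any $n\geq 2$ by taking $v(S)=1$ for every non-empty $S\subseteq N$, where IR would require a total share of at least $n$ while feasibility caps it at $1$.
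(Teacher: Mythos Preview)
Your proof is correct and takes essentially the same approach as the paper: both exhibit a two-player OCG with monotone normalized valuation in which $v(\{1\})+v(\{2\})>v(N)$, so IR would force the total distributed value to exceed $v(N)$, contradicting the feasibility condition $\sum_i\phi(G,i)=v(N)$. The paper uses $v(\{1\})=2$, $v(\{2\})=3$, $v(\{1,2\})=4$ and phrases the contradiction in an online (sequential) fashion, whereas your choice $v(\{1\})=v(\{2\})=v(\{1,2\})=1$ is slightly cleaner and the static budget argument you give is entirely sufficient.
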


\begin{proof}
	Consider the following example, $G=(N,v,\pi)$, $N=\{1,2\}$, $v(\{1\})=2,v(\{2\})=3, v(\{1,2\})=4$, $\pi=(1,2)$. In order to satisfy IR, when player $1$ comes, $v(\{1\})=2$ should be allocated to player $1$, however, when player $2$ joins the coalition, the largest value that can be shared by player $2$ is $4-2=2 < v(\{2\})$, violating IR.
\end{proof}

In view of this impossibility result, we restrict the valuation to \textit{superadditive} domain. For an OCG $G=(N, v,\pi)$, we say $G$ is a superadditive OCG if the valuation function is superadditive. Next, we focus on superadditive OCGs. We begin by verifying that \DMC and \SV satisfy IR, while \eRFC does not. We then introduce an IR refinement paradigm to modify all three new rules while preserving all other axiomatic satisfaction.

\begin{proposition}\label{RFC_is_not_IR}
	In OCGs with superadditive valuations, \DMC and \SV satisfy IR while \eRFC does not.
\end{proposition}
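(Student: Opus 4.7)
The plan is to verify \DMC and \SV positively, then exhibit a counterexample for \eRFC. For \DMC, player $i$'s final payoff is fixed at arrival to $\phi(G,i) = v(N_{\pi\mid i}) - v(N_{\pi\mid i}\setminus\{i\})$; applying superadditivity to the disjoint sets $N_{\pi\mid i}\setminus\{i\}$ and $\{i\}$ gives $v(N_{\pi\mid i}) \geq v(N_{\pi\mid i}\setminus\{i\}) + v(\{i\})$, hence $\phi(G,i) \geq v(\{i\})$. For \SV, the final allocation coincides with $\ShVa(G,i)$ because the last prefix sub-game is $G$ itself, and the classical argument then applies: every marginal term $v(S\cup\{i\})-v(S)$ appearing in the Shapley average satisfies $v(S\cup\{i\})-v(S) \geq v(\{i\})$ by superadditivity, so the convex combination over orderings also lies above $v(\{i\})$.

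For \eRFC I would use a small three-player counterexample. Let $N=\{1,2,3\}$, $\pi=(1,2,3)$, and set $v(\{3\})=v(\{1,2\})=v(\{1,3\})=v(\{2,3\})=1$, $v(\{1,2,3\})=2$, with $v(S)=0$ otherwise. A brief enumeration confirms superadditivity, the binding inequality being $v(\{3\})+v(\{1,2\}) = v(\{1,2,3\})$. The greedy monotone decomposition first peels off the layer-$1$ SOCG $\bar{v}_1$ with minimal winning coalitions $\{3\}$ and $\{1,2\}$; the residual $v-\bar{v}_1$ is supported only on $\{1,2,3\}$ with value $1$, yielding $\bar{v}_2$ as the pivotal-$\{1,2,3\}$ SOCG, so $v = \bar{v}_1 + \bar{v}_2$. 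Running \RFC on $\bar{v}_1$: the value jumps from $0$ to $1$ when player $2$ arrives via the coalition $\{1,2\}$, both $1$ and $2$ are critical, and the first arrived critical player is player $1$, who collects the marginal $1$; player $3$'s subsequent arrival is non-contributional in $\bar{v}_1$. Running \RFC on $\bar{v}_2$: only player $3$'s arrival triggers a jump, all three players are critical, and again player $1$ collects the marginal $1$. Aggregating across the two components gives $\phi(G,3)=0 < 1 = v(\{3\})$, so IR fails.

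The main obstacle is making sure that the GM decomposition actually produces the two SOCGs $\bar{v}_1,\bar{v}_2$ I claim rather than some alternative decomposition; this is essentially careful bookkeeping. The crux is to spell out the super-level-set $\{S : v(S) \geq 1\}$ and identify its minimal elements as exactly $\{3\}$ and $\{1,2\}$, then confirm the residual is monotone, $0/1$-valued, and supported only on $\{1,2,3\}$. Once the decomposition is pinned down, the \RFC calculation on each simple game is mechanical.
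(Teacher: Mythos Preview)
Your proof is correct. The arguments for \DMC and \SV are essentially identical to the paper's. For \eRFC you construct a different, slightly larger counterexample: three players with $v(\{3\})=v(\{1,2\})=v(\{1,3\})=v(\{2,3\})=1$ and $v(N)=2$, where the GM decomposition has two layers and \RFC hands all value to player $1$, leaving player $3$ below $v(\{3\})$. The paper instead uses a two-player instance $v(\{1\})=1$, $v(\{2\})=2$, $v(\{1,2\})=5$, whose GM decomposition has three layers $(1,\bar G_1),(1,\bar G_2),(3,\bar G_3)$ and yields $\phi(G)=(4,1)$, violating IR for player $2$. The paper's example is marginally simpler (fewer players, and the decomposition bookkeeping is shorter), while yours has the minor aesthetic advantage that the victim receives exactly zero rather than a positive-but-insufficient share; either route settles the claim with the same mechanism.
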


For \MES, \NDMES, and \DULMES rules, we propose the following IR refinement paradigm to modify these rules. 

\begin{tcolorbox}[title=IR Refinement Paradigm, title filled]
	\textbf{Input}: $G=(N,v,\pi)$, value-sharing rule $\phi$ \\
	$\forall\, i \in N$, initialize $\hat{\phi}(G, i) \leftarrow 0$. \\
	For $i$ in arriving order $\pi$:\\
	\tab \textbf{$\hat{\phi}(G,i) \leftarrow v(\{i\})$}.\\
	\tab Compute $S_i$ by the value-sharing rule $\phi$.\\
	\tab For each player $j \in S_i$:\\
	\tab[1cm] $\hat{v}_i \leftarrow \MC(G^i,N_{\pi \mid i} \setminus \{i\}, i) - v(\{i\})$.\\ 
	\tab[1cm]  $\hat{\phi}(G,i)\leftarrow \hat{\phi}(G,i) + \frac{1}{|S_i|}\hat{v}_i$.\\
	\textbf{Output}: $\hat{\phi}(G)$.
\end{tcolorbox}

Denote the refined rules with an IR prefix, for example: from \DULMES to IR-\DULMES rule. This refinement ensures that the IR axiom is satisfied while preserving all other axioms, including S-STAY, EA, PART, and OD.

\begin{theorem}\label{maintain_IR_property}
	IR-\MES, IR-\NDMES, and IR-\DULMES rules satisfy the IR axiom while preserving the satisfaction of all axioms held before the refinement.
\end{theorem}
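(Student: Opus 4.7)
The proof plan follows the structure of the IR refinement: every arriving player is first granted $v(\{i\})$, and then the residual $\hat{v}_i = \MC(G^i, N_{\pi\mid i}\setminus\{i\}, i) - v(\{i\})$ is distributed following the logic of the underlying rule $\phi$. Under superadditivity, $\hat{v}_i \ge 0$, so every increment in the redistribution is non-negative. I would organize the proof into three blocks.

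\textbf{Feasibility and IR.} First, superadditivity gives $v(N_{\pi\mid i}) \ge v(N_{\pi\mid i}\setminus\{i\}) + v(\{i\})$, so $\hat{v}_i \ge 0$ and all shares are non-negative. A telescoping argument then yields $\sum_{i\in N}\hat{\phi}(G,i) = \sum_{i\in N}\bigl(v(\{i\}) + \hat{v}_i\bigr) = \sum_{i\in N}\MC(G^i, N_{\pi\mid i}\setminus\{i\}, i) = v(N)$, so the refinement is feasible. IR is then immediate: player $i$ starts with $v(\{i\})$ and only accumulates non-negative shares thereafter, so $\hat{\phi}(G,i) \ge v(\{i\})$.

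\textbf{Inheriting PART, STAY, and OD.} PART follows because a contributional player $i$ has $\MC > 0$: either $v(\{i\})>0$, in which case $\hat{\phi}(G^i,i) \ge v(\{i\}) > 0$ directly, or $v(\{i\})=0$, in which case $\hat{v}_i = \MC > 0$ and the underlying rule's PART guarantees $i \in S_i$, giving a strictly positive share at arrival. STAY is immediate from non-negativity of every subsequent increment. For OD (for IR-\NDMES and IR-\DULMES), any dummy player $j$ in $G^i$ satisfies $v(\{j\}) = v(\emptyset \cup \{j\}) - v(\emptyset) = 0$, and $j \notin S_i$ by OD of the underlying rule; hence $j$ receives nothing at any step, yielding $\hat{\phi}(G,j)=0$.

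\textbf{EA and S-STAY.} For EA, I would piggyback on the EA proof of the base rule: the $v(\{i\})$ contribution is arrival-order-invariant (it depends only on the singleton value), and the residual $\hat{v}_i$ is distributed exactly on the set $S_i$ that $\phi$ would use. Comparing two orders $\pi$ and $\pi'$ that differ only in player $i$'s delay, the difference in $i$'s total payoff reduces (up to the invariant $v(\{i\})$ terms) to a difference of quantities already handled by $\phi$'s EA argument. The main obstacle is S-STAY: when $i$ is contributional and $j \prec_\pi i$ with $v(N_{\pi\mid i}) > v(N_{\pi\mid i}\setminus\{j\})$, we need $\hat{\phi}(G^i,j) > \hat{\phi}(G^j,j)$. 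The S-STAY of the underlying $\phi$ places $j$ into $S_i$, so the remaining delicate step is to show that under superadditivity the residual is in fact strictly positive whenever the S-STAY hypothesis fires, i.e., that a genuine contribution of a predecessor $j$ to the marginal of $i$ forces $\MC > v(\{i\})$. This is the main technical hurdle and requires a careful case analysis coupling the predecessor-contribution inequality with the superadditive decomposition of $v(N_{\pi\mid i})$.
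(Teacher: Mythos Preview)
Your outline is broadly sound for IR, PART, STAY, and OD, but there are two genuine gaps.

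\textbf{S-STAY.} The ``main technical hurdle'' you identify cannot be cleared: the implication you hope to prove is false. Take $N=\{1,2\}$, $v(\{1\})=v(\{2\})=1$, $v(\{1,2\})=2$ (superadditive). When player $2$ arrives, she is contributional, and player $1$ satisfies $v(\{1,2\})>v(\{2\})$, so the S-STAY hypothesis fires; yet $\MC = 1 = v(\{2\})$ and the residual $\hat v_2$ is zero, so player $1$'s share does not strictly increase. The paper does not prove S-STAY from superadditivity alone; it explicitly notes this incompatibility and then restricts attention to instances where $\MC(G^i,N_{\pi\mid i}\setminus\{i\},i) > v(\{i\})$ for every arrival. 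Your plan to derive strict positivity of the residual via a case analysis on the predecessor-contribution inequality would therefore fail; you need to adopt the same extra restriction rather than try to prove it.

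\textbf{EA.} The piggyback argument is too quick. The $v(\{i\})$ term for player $i$ herself is indeed order-invariant, but the offsets $-v(\{k\})$ for the intermediate players $k\in\{i+1,\dots,j\}$ are \emph{not} invariant from $i$'s viewpoint: in $\pi_1$ player $i$ shares in those residuals (and so absorbs a fraction of each $-v(\{k\})$), whereas in $\pi_2$ she does not. Concretely, for IR-\MES the difference $\hat\phi(G_1,i)-\hat\phi(G_2,i)$ equals the \MES difference \emph{minus} $\sum_{k=i}^{j}\tfrac{1}{|N_{\pi_1\mid k}|}v(\{k\}) - \tfrac{1}{|N_{\pi_2\mid i}|}v(\{i\})$, which is typically positive. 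The \MES EA bound uses only monotonicity and gives a slack of $v(N_{\pi_1\mid j}\setminus\{i\}) - v(N_{\pi_1\mid i}\setminus\{i\})$; to absorb the extra negative term you must invoke superadditivity, e.g.\ $v(N_{\pi_1\mid j}\setminus\{i\}) \ge v(N_{\pi_1\mid i}\setminus\{i\}) + \sum_{k=i+1}^{j} v(\{k\})$. The paper redoes the EA computations for each refined rule with exactly this kind of superadditive step, rather than reducing to the base rule's EA proof. Your sketch would need the same rework.
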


\section{Conclusions}
In this paper, we propose new participation incentive axioms for the emerging online cooperative games, including PART, S-STAY, and IR. We demonstrate that existing value-sharing rules fail to meet these properties and often cannot satisfy EA for general valuations. In contrast, our newly devised value-sharing rules successfully satisfy these axioms. An immediate open question is whether there exist polynomial-time sharing rules that simultaneously satisfy EA, PART, S-STAY, OD, and IR. Additionally, as with classical cooperative game theory, achieving a characterization of rules with respect to axiomatic properties remains an intriguing challenge. Exploring cost-sharing games or hedonic games in an online context could also be a promising direction for future research.

\bibliographystyle{named}
\bibliography{reference}

\newpage 
\section*{Appendix}

\subsection{Omitted Proofs For \Cref{new_desirable_rules}}
\subsubsection{Proof of \Cref{MES_property}}
\begin{proof}
	\textbf{(S-STAY)}. \MES rule directly satisfies STAY as each agent's sharing value is non-decreasing. For S-STAY, now consider an OCG $G=(N,v,\pi)$, for any prefix sub-game $G^i$ with arriving player $i$ and $i$ is a contributional player, for every player $j \in N_{\pi \mid i} \setminus \{i\}$ such that $v(N_{\pi \mid i}) > v(N_{\pi \mid i} \setminus \{j\})$, we have $\phi(G^j,j) = \frac{1}{|N_{\pi \mid j}|}\MC(G^j,N_{\pi \mid j} \setminus \{j\}, j)$ and 
	\begin{align*}
		\phi(G^i,j) &= \phi(G^j,j) + \sum_{k=j+1}^{i-1} \frac{1}{|N_{\pi \mid k}|}\MC(G^k, N_{\pi \mid k}\setminus \{k\},k) + \frac{1}{|N_{\pi \mid i}|}\MC(G^i,N_{\pi \mid i} \setminus \{i\}, i).
	\end{align*}
	Since $i$ is a contributional player for $G^i$, i.e., $\MC(G^i,N_{\pi \mid i} \setminus \{i\}, i) > 0$, we can see $\phi(G^i,j) > \phi(G^j,j)$, implying \MES rule satisfies S-STAY. \\
	\textbf{(EA)}. Consider any player $i\in N$ and two OCGs $G_1=(N,v,\pi_1)$ and $G_2=(N,v,\pi_2)$ with two different arriving orders $\pi_1$ and $\pi_2$.
	\[
	\begin{aligned}
		\pi_1: (1, 2, \ldots, i - 1, \mathbf{i} , i + 1, \ldots, j - 1, j, j + 1, \ldots, n),\\
		\pi_2: (1, 2, \ldots, i - 1, i + 1, \ldots, j - 1, j, \mathbf{i}, j + 1, \ldots, n).
	\end{aligned}
	\]
	In $\pi_2$, player $i$ delays her arrival between player $j$ and $j+1$. Next, we show that $i$ has no incentive to delay her arrival.
	For each prefix sub-game $G_1^i$ within $G_1$, the value shared by $i$ can be represented as
	\begin{equation}
	\phi(G_1,i)=\sum_{k=i}^n \frac{1}{|N_{\pi_1\mid k}|}\MC(G_1^k,N_{\pi_1 \mid k}\setminus \{k\},k).
	\end{equation}
	In $G_2$, the value shared by $i$ can be represented as
	\begin{equation}
	\phi(G_2,i)= \frac{1}{|N_{\pi_2\mid i}|}\MC(G_2^i,N_{\pi_2\mid j},i) + \sum_{k=j+1}^n \frac{1}{|N_{\pi_2\mid k}|}\MC(G_2^k,N_{\pi_2\mid k}\setminus \{k\},k).
	\end{equation}
	Starting from player $(j+1)$'s arrival, we know that $N_{\pi_1\mid q} = N_{\pi_2 \mid q}$ for every agent $q\prec_{\pi_1} j, q\in N$. That is, for agent $i$, $\sum_{k=j+1}^n \frac{1}{|N_{\pi_1\mid k}|}\MC(G_1^k,N_{\pi_1\mid k}\setminus \{k\},k)=\sum_{k=j+1}^n \frac{1}{|N_{\pi_2\mid k}|}\MC(G_2^k,N_{\pi_2\mid k}\setminus\{k\},k)$. So the difference between $\phi(G_1,i)$ and $\phi(G_2,i)$ can be written as
	\begin{equation}
	\begin{aligned}
		\phi(G_1,i)-\phi(G_2,i)=&\sum_{k=i}^{j}\frac{1}{|N_{\pi_1\mid k}|} \MC(G_1^k, N_{\pi_1\mid k}\setminus\{k\},k) - \frac{1}{|N_{\pi_2\mid i}|}\MC(G_2^i,N_{\pi_2\mid i}\setminus \{i\},i) \\
		\geq& \frac{1}{|N_{\pi_2\mid i}|} \sum_{k=i}^j \MC(G_1^k,N_{\pi_1 \mid k} \setminus \{k\},k) - \frac{1}{|N_{\pi_2\mid i}|}\MC(G_2^i,N_{\pi_2\mid i}\setminus\{i\},i) \\
		=& \frac{1}{|N_{\pi_2\mid i}|} \Big( v(N_{\pi_1\mid j})-v(N_{\pi_1\mid i} \setminus \{i\} ) - \MC(G_2^i,N_{\pi_2\mid i} \setminus \{i\} ,i) \Big).
	\end{aligned}
	\end{equation}
	Note that $N_{\pi_2\mid i} \setminus \{i\}  = \{1,2,\ldots, i-1,i+1,\ldots,j\}=N_{\pi_1\mid j}\setminus \{i\}$. Therefore, 
$
\MC(G_2^i, N_{\pi_2\mid i} \setminus \{i\} ,i)
		= v(N_{\pi_2\mid i}) - v(N_{\pi_2\mid i} \setminus \{i\} ) 
		= v(N_{\pi_2\mid i}) - v(N_{\pi_1\mid j}\setminus \{i\}) 
		= v(N_{\pi_1\mid j}) - v(N_{\pi_1\mid j}\setminus \{i\}).
$
	After substituting the above expression,
	\begin{equation}
	\phi(G_1,i) - \phi(G_2,i) = \frac{1}{|N_{\pi_2\mid i}|} \Big( v(N_{\pi_1\mid j}\setminus \{i\}) - v(N_{\pi_1\mid i} \setminus \{i\} )  \Big) \geq 0.
	\end{equation}
	The last inequality holds because of the monotonicity of valuation function. Since $N_{\pi_1\mid j}\setminus \{i\}=\{1,2,\cdots, i-1,i+1,\cdots, j\}$ and $N_{\pi_1\mid i} \setminus \{i\} =\{1,2,\cdots, i-1\}$, we have $N_{\pi_1\mid i} \setminus \{i\}  \subset N_{\pi_1\mid j}\setminus \{i\}$, which implies $v(N_{\pi_1\mid j}\setminus \{i\}) \geq v(N_{\pi_1\mid i} \setminus \{i\} )$. Hence, \MES rule satisfies EA axiom.
	
	\noindent \textbf{(PART)} Consider an OCG $G=(N,v,\pi)$, for each player $i$ in the arriving order $\pi$ and the prefix sub-game $G^i$, if $i$ is a contributional player, then $\MC(G^i, N_{\pi \mid i} \setminus \{i\},i) > 0$, implying $\phi(G^i,i)=\frac{1}{|N_{\pi\mid i}|}\MC(G^i,N_{\pi \mid i} \setminus \{i\}, i) > 0$. Hence, agent $i$ receives a positive shared value immediately after her arrival, in accordance with the PART axiom.
\end{proof}

\subsubsection{Proof of \Cref{NDMES_properties}}

\begin{proof}
	\textbf{(S-STAY)} Firstly, STAY directly gets satisfied as each agent's shared value is non-decreasing under \NDMES rule. For S-STAY, now consider an OCG $G=(N,v,\pi)$, for any prefix sub-game $G^i$ with arriving player $i$ and $i$ is a contributional player, for every player $j \in N_{\pi \mid i} \setminus \{i\}$ such that $v(N_{\pi \mid i}) > v(N_{\pi \mid i} \setminus \{j\})$, then $j$ is non-dummy in the sub-game $G^i$, which implies that $j$ is included in $S_i$ and share the value $\frac{1}{|S_i|}\MC(N_{\pi \mid i},i)>0$, i.e., for any $j$ satisfying $v(N_{\pi \mid i})> v(N_{\pi \mid i}\setminus\{j\})$, we have $\phi(G^i,j) > \phi(G^j,j)$. Therefore, \NDMES rule satisfies S-STAY.
	
	\noindent\textbf{(EA)} Consider two OCGs $G_1=(N,v,\pi_1)$ and $G_2=(N,v,\pi_2)$ where $\pi_1$ and $\pi_2$
	\[
	\begin{aligned}
		\pi_1: (1, 2, \ldots, i - 1, \mathbf{i} , i + 1, \ldots, j - 1, j, j + 1, \ldots, n),\\
		\pi_2: (1, 2, \ldots, i - 1, i + 1, \ldots, j - 1, j, \mathbf{i}, j + 1, \ldots, n).
	\end{aligned}
	\]
	According to the role of agent $i$ under $\pi_1$ and $\pi_2$, we consider the following four cases. 
	
	\noindent\textbf{Case 1.} For both $\pi_1$ and $\pi_2$, player $i$'s contributes zero marginal value upon arrival. For the shared value after player $j+1$'s arrival, it is the same for this two orders. The only difference is that player $i$ might gain some shared-value when $i$ is not a dummy player for sub-games from $G^{i+1}$ to $G^{j}$. Therefore, the value shared by player $i$ in $\pi_1$ will always at least as that in $\pi_2$. 
	
	\noindent\textbf{Case 2.} For $\pi_1$, player $i$ is a contributional player while for $\pi_2$, player $i$ is not. Then, for $\pi_1$, $i$ will never be a dummy player in all sub-games after her arrival, thus $i$ shares all the marginal value from $i$ to $n$. However, in $\pi_2$, player $i$ only shares marginal value from player $j+1$ to player $n$.
	
	\noindent\textbf{Case 3.} For both $\pi_1$ and $\pi_2$, player $i$ are a contributional players. In the OCG $G_1$, the value shared by player $i$ can be represented as
	\begin{equation}
		\phi(G_1,i) =\frac{1}{|S^1_i|} \MC(G_1^i, N_{\pi_1\mid i} \setminus \{i\} , i) + \sum_{k=i+1}^n \frac{1}{|S^1_k|} \MC(G_1^k, N_{\pi_1 \mid k} \setminus \{k\}, k).
	\end{equation}
	Here $S_i^1$ represents the player set $S_i$ for player $i$ in OCG $G_1$. In the OCG $G_2$, the value shared by player $i$ can be represented as 
	\begin{equation}
		\phi(G_2,i) = \frac{1}{|S^2_i|}\MC(G_2^i,N_{\pi_2\mid i} \setminus \{i\} , i) + \sum_{k = j + 1}^n \frac{1}{|S_k^2|}\MC(G_2^k,N_{\pi_2 \mid k}\setminus \{k\}, k).
	\end{equation}
	Here $S_i^2$ represents the player set $S_i$ for player $i$ in OCG $G_2$. Also, after player $j$'s arrival, for any player $k, j\prec_{\pi_1} k$, we have $|S_k^1|=|S_k^2|$ as the prefixes $N_{\pi_1 \mid k}$ and $N_{\pi_2 \mid k}$ become exactally the same in $\pi_1$ and $\pi_2$. Then, the difference between $\phi(G_1,i)$ and $\phi(G_2,i)$ can be written as
	\begin{equation}
		\begin{aligned}
			&\phi(G_1,i) - \phi(G_2,i)\\
			=& \frac{1}{|S^1_i|} \Big(v(N_{\pi_1\mid i}) - v(N_{\pi_1 \mid j} \setminus \{i\}) \Big)
			+ \sum_{k=i+1}^{j} \frac{1}{|S^1_k|}  \Big(v(N_{\pi_1\mid k}) - v(N_{\pi_1 \mid k} \setminus \{k\}) \Big) \\
			&- \frac{1}{|S^2_i|} \Big(v(N_{\pi_1 \mid j}) -  v(N_{\pi_1 \mid j - 1}\setminus \{i\}) \Big).
		\end{aligned}
	\end{equation}
	For any arriving order $\pi$, by the definition of the dummy player, if a player is non-dummy in sub-game $G^i$, then the player is non-dummy for the later sub-games $G_j, i\prec_{\pi} j$, thus the number of player set sharing the value is non-decreasing with the increasing number of arrived players, i.e., $S_i \subseteq S_j, ~\forall i \prec_{\pi} j$. Thus we have
	\begin{equation}
		\begin{aligned}
			&\frac{1}{|S^1_i|} \Big(v(N_{\pi_1\mid i}) - v(N_{\pi_1 \mid j} \setminus \{i\}) \Big)
			+ \sum_{k=i+1}^{j} \frac{1}{|S^1_k|}  \Big(v(N_{\pi_1\mid k}) - v(N_{\pi_1 \mid k} \setminus \{k\}) \Big) \\
			\geq& \frac{1}{|S^1_{j}|} \Big[\sum_{k=i}^{j} \Big( v(N_{\pi_1 \mid k}) - v(N_{\pi_1 \mid k}\setminus \{k\}) \Big)\Big] = \frac{1}{|S^1_{j}|}\Big( v(N_{\pi_1 \mid j}) - v(N_{\pi_1 \mid j} \setminus \{i\}) \Big).
		\end{aligned}
	\end{equation}
	Another observation is that $S_j^1 = S_i^2$ as they both represent the set of non-dummy players within the player set $\{1,2,\ldots, i-1,i,i+1, \ldots, j\}$. Therefore,
	\begin{equation}
		\begin{aligned}
			&\phi(G_1,i) - \phi(G_2,i) \\
			\geq & \frac{1}{|S^1_{j}|}\Big( v(N_{\pi_1 \mid j}) - v(N_{\pi_1 \mid j} \setminus \{i\}) \Big) -  \frac{1}{|S^2_i|} \Big(v(N_{\pi_1 \mid j}) -  v(N_{\pi_1 \mid j}\setminus \{i,j\}) \Big) \\
			=& \frac{1}{|S^1_{j}|} \Big(v(N_{\pi_1 \mid j}\setminus \{i,j\}) - v(N_{\pi_1\mid i} \setminus \{i\} ) \Big) \geq 0.
		\end{aligned}
	\end{equation}
	The last step is from the monotone valuation function as $N_{\pi_1 \mid j}\setminus \{i,j\}=\{1,2,\ldots, i-1, i+1,\ldots, j\} \supset \{1,2,\ldots, i-1\} = N_{\pi_1 \mid j} \setminus \{i\}$ implies that $v(N_{\pi_1 \mid j}\setminus \{i,j\}) \geq v(N_{\pi_1 \mid j} \setminus \{i\})$.
	
	\noindent \textbf{Case 4.} For $\pi_1$, player $i$ is not a contributional player while for $\pi_2$, player $i$ is a contributional player. In this case, we consider the following two scenarios. 
	\textbf{i).} Player $i$ arrives as a dummy player and keeps to be dummy until the player $j$ comes ($i$ cannot be dummy after $j$'s arrival as we assume $i$ is a contributional player in $\pi_2$). Then, the player $i$'s shared value in $G_1$ and $G_2$ can be represented as the  
	\begin{equation}
	\begin{split}
		&\phi(G_1,i) - \phi(G_2,i)\\
		=& \frac{1}{|S^1_j|}\MC(G_1^j, N_{\pi_1 \mid j}, j) + \sum_{k=j +1}^n \frac{1}{|S^1_k|}\MC(G_1^k,N_{\pi_1 \mid k}, k) \\
		&-  \left(\frac{1}{|S^2_i|}\MC(G_2^i, N_{\pi_2\mid i} \setminus \{i\} , i) + \sum_{k=j +1}^n \frac{1}{|S^2_k|}\MC(G_2^k,N_{\pi_2 \mid k} \setminus \{k\}, k) \right).
	\end{split}
	\end{equation}
	Still, we have $S^1_j=S^2_i$ and player $i$ shares the same value in all the sub-games after agent $j$'s arrival. Then the difference between $\phi(G_1,i)$ and $\phi(G_2,i)$ can be written as
	\begin{equation}
	\begin{aligned}
		&\phi(G_1,i) - \phi(G_2,i) =\frac{1}{|S_j^1|} \Big( v(N_{\pi_1 \mid j} \setminus \{i\}) - v(N_{\pi_1 \mid j}\setminus\{j\}) \Big).
	\end{aligned}
	\end{equation}
	Since player $i$ is a dummy player in all sub-games before $j$'s arrival, we have $v(N_{\pi_1 \mid j}\setminus\{j\}) = v(N_{\pi_1 \mid j}\setminus \{i,j\})$. Hence, 
	\begin{equation}
	\phi(G_1,i) - \phi(G_2,i) = \frac{1}{|S_j^1|} \Big( v(N_{\pi_1 \mid j} \setminus \{i\}) - v(N_{\pi_1 \mid j} \setminus \{i,j\})\Big) \geq 0.
	\end{equation}
	\noindent \textbf{ii).} Player $i$ is dummy before some player $(i + q)$'s arrival, in this case, $\phi(G_1,i)$ can be represented as
	\begin{equation}
\phi(G_1,i) = \sum_{k=i+q}^{j-1} \frac{1}{|S^1_k|}\MC(G_1^k,N_{\pi_1 \mid k} \setminus \{k\}, k) + \frac{1}{|S_j^1|}\MC(G_1^j,N_{\pi_1 \mid j}\setminus\{j\}, j).
	\end{equation}
	By leveraging the same reduction technique in Case 3,
	\begin{equation}
	\begin{aligned}
		\phi(G_1,i) =& \sum_{k=i+q}^{j-1} \frac{1}{|S^1_k|}\MC(G_1^k,N_{\pi_1 \mid k} \setminus \{k\}, k) + \frac{1}{|S_j^1|}\MC(G_1^j,N_{\pi_1 \mid j}\setminus\{j\}, j) \\
		\geq & \frac{1}{|S_j^1|} \Big( v(N_{\pi_1 \mid j}) - v(N_{\pi_1 \mid i+q}\setminus\{i+q\}) \Big).
	\end{aligned}
	\end{equation}
	Furthermore, the difference between $\phi(G_1,i)$ and $\phi(G_2,i)$ can be written as
	\begin{align*}
		&\phi(G_1,i) - \phi(G_2,i) \\
		\geq &  \frac{1}{|S_j^1|} \Big( v(N_{\pi_1\mid j}) - v(N_{\pi_1 \mid i+q}\setminus\{i+q\}) \Big) - \frac{1}{|S^2_i|}\Big( v(N_{\pi_1 \mid j}) - v(N_{\pi_1 \mid j}\setminus \{i\}) \Big) \\
		\geq & \frac{1}{|S_j^1|} \Big( v(N_{\pi_1 \mid j} \setminus \{i\}) - v(N_{\pi_1 \mid i+q}\setminus\{i+q\})  \Big) \tag{$\because$ $|S_j^1|=|S_i^2|$} \\  
		= & \frac{1}{|S_j^1|} \Big( v(N_{\pi_1 \mid j} \setminus \{i\}) - v(N_{\pi_1 \mid i+q} \setminus \{i,i+q\})  \Big) \tag{$\because$ $i$ is dummy before $i+q$'s arrival}  \\
		\geq & 0.
	\end{align*}
	Notice that it still holds $S_j^1=S_i^2$. Also, we know that $v(N_{\pi_1 \mid i+q}\setminus \{i+q\}) = v(N_{\pi_1 \mid i+q}\setminus \{i,i+q\})$ as player $i$ is dummy for all the sub-games before player $(i+q)$'s arrival. We conclude that \NDMES rule satisfies EA axiom based on the discussion of aforementioned four cases.
	
	\noindent\textbf{(PART)} PART holds for \NDMES rule as for each arriving player $i$, when $i$ is a contributional player, i.e., $\MC(G^i, N_{\pi \mid i} \setminus \{i\}, i) > 0$, $i$ is not a dummy player and is included in the sharing set $S_i$. Hence, player $i$ immediately shares $\frac{1}{|S_i} \MC(G^i,N_{\pi \mid i} \setminus \{i\},i)$ upon her arrival. 
	
	\noindent\textbf{(OD)} Consider an OCG $G=(N,v,\pi)$, for every prefix sub-game $G^i$, \NDMES rule eliminates all the dummy players in $G^i$, aligning directly with the definition of Online-Dummy.
\end{proof}

\subsubsection{Proof of \Cref{ULMES_properties}}
\begin{proof}
	\textbf{(S-STAY)} \ULMES rule satisfies STAY as each player's shared value is non-decreasing. To show \ULMES rule satisfies S-STAY, consider an OCG $G=(N,v,\pi)$, for any player $j\in N$ and any prefix sub-game $G^i$ where $j \prec_{\pi} i$, assume $i$ is a contributional player, if $j$ satisfies $v(N_{\pi \mid i}) > v(N_{\pi \mid i}\setminus\{j\})$, then $j$ must be in the final $S_i$ after the elimination process of \ULMES rule. The proof is as follows. Denote $S^j_i$ as the tentative $S_i$ when checking whether player $j$ should be eliminated or not. Recall the valuation function is monotone and $S^j_i \subseteq N_{\pi \mid i}$. It means $v(N_{\pi \mid i}\setminus\{j\})\geq v(S^j_i \setminus\{j\})$. Since $v(N_{\pi \mid i}) > v(N_{\pi \mid i}\setminus\{j\})$, then we have $v(N_{\pi \mid i}) >  v(S^j_i \setminus\{j\})$, which implies $j$ must be kept in $S_i^j$. Then, $j$ will share $\frac{1}{|S_i|} \MC(G^i, N_{\pi \mid i} \setminus \{i\}, i) > 0$. Therefore, \ULMES rule satisfies S-STAY.
	
	\noindent \textbf{(PART)} Consider an OCG $G=(N,v,\pi)$, for any prefix sub-game $G^i$ with arriving player $i$, if $i$ is a contributional player, then $i$ is kept in $S_i$ as $\MC(G^i,N_{\pi \mid i} \setminus \{i\}, i) = v(N_{\pi \mid i}) - v(N_{\pi \mid i} \setminus \{i\}) > 0$. 
	
	\noindent \textbf{(OD)} Consider an OCG $G=(N,v,\pi)$, for any prefix sub-game $G^i$ with an arriving contributional player $i$, we show that \ULMES rule never assigns positive value to dummy players in $G^i$. Prove by contradiction. Assume that there exists some dummy player $j$ in $G^i$ who gets assigned positive value, then $j$ must survive in the elimination of $S_i$, however, by the definition of dummy player, for any subset $T\subseteq N_{\pi \mid i}$, $v(T)=v(T\cup\{j\})$. Denote $S_i^j$ as the tentative $S_i$ for the time-step when $j$ is checked whether she should be eliminated from $S_i$. According to \ULMES rule, $j\in S_i^j$ and $v(S_i^j)=v(N_{\pi \mid i})$, let $T=S_i^j\setminus \{j\}$, we have $v(S_i^j\setminus\{j\}) = v(S_i^j)=v(N_{\pi \mid i})$, meaning $j$ must be eliminated from $S_i^j$. Hence, $j$ should not survive in the final sharing set $S_i$ to share the value, contradicting to the assumption that $j$ gets some positive sharing value in $G^i$.
\end{proof}

\subsubsection{Greedy Monotone (GM) Decomposition}
We revisit the greedy monotone (GM) decomposition algorithm proposed by \citet{GZZ+24a}. GM decomposition maps a general OCG $G=(N,v,\pi)$ into a linear combination of multiple SOCGs. The detailed procedures of GM decomposition is provided in \Cref{alg:GM_Decomposition}.

\begin{algorithm}[h]
	\caption{Greedy Monotone (GM) Decomposition}
	\label{alg:GM_Decomposition}
	\textbf{Input}: An OCG $G=(N,v,\pi)$.  \\
	\textbf{Output}: A decomposition $D(G)$.
	\begin{algorithmic}[1] 
		\STATE Initialize $D(G)\leftarrow \emptyset$, $v_1(\cdot)\leftarrow v(\cdot)$ and $k\leftarrow 1$;
		\WHILE{$\max_{T\subseteq N}\{v_k(T)\} > 0$}
		\STATE $S\leftarrow \underset{T\subseteq N,v(T)>0}{\arg\min} v_k(T)$;
		\STATE Coefficient $c_k\leftarrow v_k(S)$; 
		\STATE Initialize component SOCG $\bar{G}_k=(N,\bar{v}_k,\pi)$;
		\FOR{$T\subseteq N$}
		\IF{$v_k(T)>0$}
		\STATE $\bar{v}_k(T)\leftarrow 1$;
		\ELSE 
		\STATE $\bar{v}_k(T)\leftarrow 0$;
		\ENDIF
		\STATE Update $v_k(T) \leftarrow v_k(T) - c_k \bar{v}_k(T)$;
		\ENDFOR 
		\STATE Add component SOCG $(c_k, \bar{G}_k)$ into $D(G)$;
		\STATE Update $k \leftarrow k + 1$;
		\ENDWHILE
	\end{algorithmic}
\end{algorithm}	

\citet{GZZ+24a} proved the following properties of the GM decomposition method,
\begin{itemize}
	\item GM decomposition outputs the $D(G)$ satisfying for each $T\subseteq N$, $v(T)=\sum_k c_k \bar{v}_k(T)$ and $\bar{v}_k(\cdot)$ is 0-1 valued monotone functions.
	\item Given a decomposition $D(G)$, for any player $i$ in $\pi$ with sub-game $G^i$, the decomposition of $D(G^i)$ is consistent with $D(G)$ within players in $N_{\mid i}$ (Consistency between the global game and prefix sub-games). 
\end{itemize}

\subsubsection{Omitted Proof of \Cref{DULMES_properties}}
\begin{proof}
	\textbf{(S-STAY)} Consider an OCG $G=(N,v,\pi)$, for any prefix sub-game $G^i$ with an arriving contributional player $i$, for every player $j$  satisfying $j \prec_{\pi} i$ and $v(N_{\pi \mid i}) > v(N_{\pi \mid i}\setminus\{j\})$. According to the GM decomposition $D(G)$, the valuation is decomposed into linear combinations. Since  $v(N_{\pi \mid i}) > v(N_{\pi \mid i}\setminus\{j\})$, there exists at least one component in $D(G)$ such that $\bar{v}(N_{\pi \mid i}) > \bar{v}(N_{\pi \mid i}\setminus\{j\})$ and $i$ is a contributional player in $\bar{G}^i$. For this component $(c,\bar{G})$, according to \Cref{ULMES_properties}, \ULMES rule satisfies S-STAY, meaning in $\bar{G}^i$, $\phi(\bar{G}^i,j)>0$. Due to the consistency of $D(G)$ and $D(G^i)$, we have $\phi(G^i,j) = \sum_{k} c_k \phi(\bar{G}^i_k,j) > 0$. \DULMES rule satisfies S-STAY.
	
	\noindent \textbf{(PART)} Consider an OCG $G=(N,v,\pi)$, for any prefix sub-game $G^i$ with an arriving contributional player 
	$i$. In the decomposition $D(G)$ of $G$, there exists at least one component $(c,\bar{G})$ such that $i$ is still a contributional player in $\bar{G}$. Prove by contradiction, assume there is no component such that $i$ is a contributional player. By the linearity and consistency of $D(G)$ regarding the valuation function, $i$ is not a contributional player in $G$, contradicting $i$ is the contributional player in $G^i$. For $\bar{G}$, as \ULMES rule is PART, meaning $i$ get positive value in $\bar{G}^i$: $\phi(\bar{G}^i,i)>0$. So $i$ has positive shared value in $G^i$ as the value $c \cdot \phi(\bar{G}^i,i)$ will be added into $\phi(G^i,i)$. Then \DULMES rule satisfies PART.
	
	\noindent \textbf{(OD)} Notice that \ULMES rule satisfies OD, assigning no value to dummy players for each component $(c,\bar{G})$ of the decomposition $D(G)$. Also, consider any OCG $G=(N,v,\pi)$, for any player $i$, if $i$ is dummy in $G$, then for SOCG $\bar{G}$, in each component of $D(G)$, $i$ is still a dummy player. Thus, for \DULMES rule which outputs the weighted sum of \ULMES rule's output over each component, it never assigns a positive value to dummy players in $G$. So, \DULMES rule satisfies the OD axiom.
	
	\noindent \textbf{(EA)} Consider an OCG $G=(N,v,\pi)$, according to \Cref{ULMES_EA_SIMPLE}, \ULMES rule satisfies EA for every SOCG. It means \ULMES rule satisfies EA for every SOCG in each component $(c,\bar{G})$ of $D(G)$. For any player $i$ in arriving order $\pi$, assume $i$ delays her arrival and changes the order into $\pi^\prime$ w.r.t. the OCG $G'=(N,v,\pi')$. Notice that the GM decomposition only depends on the valuation function and is unrelated with the arriving order. Thus, for player $i$, for each component $(c, \bar{G}=(N,\bar{v},\pi))$ and $(c, \bar{G}^\prime=(N,\bar{v},\pi'))$, \ULMES rule satisfies the EA axiom implies $\phi(\bar{G},i) \geq \phi(\bar{G}',i)$. Furthermore, $\phi(G,i) = \sum_{k}c_k \cdot \phi(\bar{G}_k,i) \geq \sum_{k}c_k\cdot \phi(\bar{G}_k',i) = \phi(G',i)$. So $i$ has no incentive to delay her arrival, meaning \DULMES rule satisfies the EA axiom.
\end{proof}	

\subsection{Omitted Proof in \Cref{individual_rational_section}}

\subsubsection{Omitted Proof of \Cref{RFC_is_not_IR}}
\begin{proof}
	Consider an OCG $G=(N,v,\pi)$, \DMC rule satisfies the IR axiom since for each player $i\in N$, $\phi(G,i)=\MC(G^i,N_{\pi \mid i} \setminus \{i\},i) \geq v(\{i\})$ with superadditive valuation function. For \SV rule, it degenerates to the classic cooperative game in which the Shapley Value satisfies IR in games with superadditive valuation functions. Next, we show that \eRFC rule does not satisfy the IR property with superadditive valuation function by the following example. Consider a game $G=(N,v,\pi)$ where $N=\{1,2\}$, $\pi=(1,2)$, and $v(\{1\})=1,v(\{2\})=2,v(\{1,2\})=5$. According to \eRFC rule by \citet{GZZ+24a}, $G$ will be decomposed into $D(G)=\{(1,\bar{G}_1),(1,\bar{G}_2),(3,\bar{G}_3)\}$. The decomposed valuation function is provided in \Cref{example_game_eRFC_not_IR}.
	\begin{table}[!ht]
		\centering  
		\scalebox{1}{
			\begin{tabular}{ccccc}
				\toprule
				Coalition &          Coef         & $\{1\}$ & $\{2\}$ & $\{1,2\}$ \\ \midrule
				Valuation &                       & $1$     & $2$     & $5$       \\ \midrule
				$\bar{v}_1(\cdot)$     & $c_1=1$  &$1$     & $1$     & $1$       \\ \midrule
				$\bar{v}_2(\cdot)$     & $c_2=1$  & $0$      & $1$     & $1$       \\ \midrule
				$\bar{v}_3(\cdot)$     & $c_3=3$  & $0$      & $0$     & $1$       \\ \bottomrule
		\end{tabular}}
		\caption{Coalition valuation and game decomposition of $G$}
		\label{example_game_eRFC_not_IR}
	\end{table}
	
	For each decomposed SOCG, the value shared by $1$ and $2$ within the order $\pi=(1,2)$ are shown in \Cref{value_share_example_game_for_eRFC_not_IR}.
	\begin{table}[htbp!]
		\centering
		\scalebox{1}{%
			\begin{tabular}{cccc}
				\toprule
				\multicolumn{4}{c}{\eRFC Rule Outcomes in $G$} \\ \midrule
				$\phi(G,i)$ & Coef    & Player $1$      & Player $2$       \\ \midrule
				$\bar{G}_1$       &  $1$    & $1$      & $0$    \\ \midrule
				$\bar{G}_2$       &  $1$    & $0$      & $1$  \\ \midrule
				$\bar{G}_3$       &  $3$    & $1$      & $0$   \\ \midrule
				$G$         &         & $\bf{4}$       & $\bf{1}$  \\ \bottomrule
			\end{tabular}
		}
		\caption{\eRFC rule outcomes in $G$}
		\label{value_share_example_game_for_eRFC_not_IR}
	\end{table}
	Finally, for the original game $G$, $\phi(G,1)=4$ and $\phi(G,2)=1$. Note that $v(\{2\})=2 > \phi(G,2)=1$. Thus, \eRFC rule does not satisfy the IR property. 
\end{proof}

\subsubsection{Omitted Proof of \Cref{maintain_IR_property}}
\begin{proof}
	For the IR-\MES rule, IR-\NDMES rule, and IR-\DULMES rule, all three rules satisfy the IR axiom. For any OCG $G=(N,v,\pi)$, each rule first assigns the value $v(\{i\})$ to any arriving player $i$ in the order $\pi$. The superadditive valuation function ensures the validity of this allocation. 
	
	Next, we show each rule maintains to satisfy all the axioms before the refinement. 
	
	We first clarify S-STAY is incompatible with IR if there exists some player $i$, $\MC(G^i,N_{\pi \mid i} \setminus \{i\},i) = v(\{i\}) > 0$. Consider an OCG $G=(N,v,\pi)$ where $N=\{1,2\}$, $\pi=(1,2)$, and $v(\{1\})=v(\{2\})=1$, $v(\{1,2\})=2$. According to IR, it must be $\phi(G,1)=\phi(G,2)=1$. However, this value distribution does not satisfy S-STAY axiom. Therefore, we next mainly focus on the superadditive valuation where for any OCG $G=(N,v,\pi)$, $\MC(G^i, N_{\pi \mid i} \setminus \{i\},i) > v(\{i\})$ for each arriving player $i$ when discussing the S-STAY axiom.
	
	We first show that \textbf{IR-\MES} satisfies S-STAY, PART, EA axioms as follows.
	
	\noindent \textbf{(S-STAY)} Consider an OCG $G=(N,v,\pi)$ with arriving order $\pi=(1,2,\dots, n)$. For the prefix sub-game $G^i$ with arriving player $i$, $\phi(G^i,i)=\frac{1}{|S_i|}\big(\MC(G^i, N_{\pi\mid i}\setminus\{i\},i)-v(\{i\})\big)$. For any other prefix sub-game $G^j$ where $i\prec_{\pi} j$. $\phi(G^j,i)=\frac{1}{|S_i|}\big(\MC(N_{\pi \mid i},i)-v(\{i\})\big)+\sum_{k=i+1}^j\big(\MC(N_{\pi \mid k},k)-v(\{k\})\big) \geq \phi(G^i,i)$. IR-\MES rule satisfies STAY. Furthermore, if in $G^j$, $\MC(N_{\pi \mid j},j) > v(\{j\})$, then $\phi(G^j,i) > \phi(G^i,i)$, satisfying S-STAY.
	
	\noindent \textbf{(PART)} Consider an OCG $G=(N,v,\pi)$, for any arriving player $i$, the value shared immediately by $i$ is $\phi(G^i,i)=v(\{i\})+\frac{1}{|S_i|}\big(\MC(G^i, N_{\pi\mid i}\setminus\{i\},i)-v(\{i\})\big)\geq 0$. Hence, IR-\MES rule satisfies PART.
	
	\noindent\textbf{(EA)} Consider any player $i\in N$ and two OCGs $G_1=(N,v,\pi_1)$ and $G_2=(N, v, \pi_2)$ with two different arriving orders $\pi_1$ and $\pi_2$.
	\[
	\begin{aligned}
		\pi_1: (1, 2, \ldots, i - 1, \mathbf{i} , i + 1, \ldots, j - 1, j, j + 1, \ldots, n),\\
		\pi_2: (1, 2, \ldots, i - 1, i + 1, \ldots, j - 1, j, \mathbf{i}, j + 1, \ldots, n). \\
	\end{aligned}
	\]
	For each prefix sub-game $G_1^i (G_2^i)$ within $G_1 (G_2)$, denote the player set sharing the marginal value by $S_i^1 (S_i^2)$ in IR-\MES rule. Then, the value shared by $i$ in $G_1$ can be represented as
	\begin{equation}
	\phi(G_1,i)=v(\{i\}) + \sum_{k=i}^{n}\frac{1}{|S_k^1|}\Big(\MC(G^k_1,N_{\pi \mid k}\setminus\{k\}, k) - v(\{k\})\Big).
	\end{equation}
	The value shared by player $i$ in $G_2$ can be written as
	\begin{equation}
	\phi(G_2,i)=v(\{i\}) + \frac{1}{|S_i^2|} \Big(\MC(G_2^i,N_{\pi \mid j},i)-v(\{i\}) \Big) +  \sum_{k=j+1}^{n}\frac{1}{|S_k^2|}\Big(\MC(G^k_2,N_{\pi \mid k}\setminus\{k\}, k\Big) - v(\{k\})\Big).
	\end{equation}
	For every player $k$ who arrives after player $j$, we have $N_{\pi_1 \mid k}=N_{\pi_2 \mid k}$. Hence,
	\begin{equation}
	\begin{aligned}
		&\phi(G_1,i) - \phi(G_2,i) \\
		=&\sum_{k=i}^j \frac{1}{|S_k^1|}\Big(\MC(G^k_1,N_{\pi_1 \mid k}\setminus \{k\}, k) - v(\{k\})\Big) - \frac{1}{|S_i^2|} \Big(\MC(G^i_2, N_{\pi_2 \mid j} \setminus \{i\}, i) - v(\{i\})\Big) \\
		\geq & \frac{1}{|S_i^2|}\sum_{k=i}^j \Big(\MC(G^k_1,N_{\pi_1 \mid k}\setminus \{k\}, k) - v(\{k\})\Big)- \frac{1}{|S_i^2|} \Big(\MC(G^i_2,N_{\pi_2 \mid j} \setminus \{i\}, i) - v(\{i\})\Big) \\
		\geq & \frac{1}{|S_i^2|} \Big( v(N_{\pi_1 \mid j})-v(N_{\pi_1 \mid j} \setminus \{i\}) - \sum_{k=i}^j v(\{k\}) \Big) - \frac{1}{|S_i^2|} \Big(\MC(G^i_2,N_{\pi_2 \mid j} \setminus \{i\}, i) - v(\{i\})\Big).
	\end{aligned}
	\end{equation}
	Note that $N_{\pi_2\mid i} \setminus \{i\} =\{1,2,\ldots, i-1,i+1,\ldots,j\}=N_{\pi_1 \mid j} \setminus \{i\}\cup \{i+1,\ldots, j\}$. Therefore we have
	\begin{equation}
	\begin{aligned}
		&\frac{1}{|S_i^2|} \Big( v(N_{\pi_1 \mid j})-v(N_{\pi_1 \mid j} \setminus \{i\}) - \sum_{k=i}^j v(\{k\}) \Big) - \frac{1}{|S_i^2|} \Big(\MC(G^i_2,N_{\pi_2 \mid j} \setminus \{i\}, i) - v(\{i\})\Big) \\
		=& \frac{1}{|S_i^2|} \Big(v(N_{\pi_1 \mid j}\setminus \{i\}) -v(N_{\pi_1 \mid i} \setminus\{i\}) - \sum_{k=i+1}^j v(\{k\})  \Big) \\
		\geq & \frac{1}{|S_i^2|} \left( v(\{i+1,\ldots, j\}) - \sum_{k=i+1}^j v(\{k\}) \right) \geq 0.
	\end{aligned}
	\end{equation}
	The last two steps are from superadditivity, we have
	$v(N_{\pi_1 \mid j}\setminus\{i\}) \geq v(N_{\pi_1 \mid i}\setminus\{i\})+v(\{i+1,\dots, j\})$ and $v(\{i+1,\ldots, j\}) \geq \sum_{k=i+1}^j v(\{k\})$.
	
	Next, we show that \textbf{IR-\NDMES} satisfies S-STAY, PART, OD, and EA axioms as follows.
	
	\noindent\textbf{(S-STAY)} Obviously, IR-\NDMES rule satisfies STAY as every player's shared value is non-decreasing. Next, we show the satisfication of S-STAY. Consider an OCG $G=(N,v,\pi)$, for any player $j$, consider any prefix sub-games $G^i,(j\prec_\pi i)$ and $i$ is a contributional player, if $j$ satisfies $v(N_{\pi \mid i})> v(N_{\pi \mid i}\setminus\{j\})$, then $j$ is non-dummy in $G^i$. So, $j$ is included in $S_i$ and share the value $\frac{1}{|S_i|}\big(\MC(G^i,N_{\pi \mid i}\setminus\{i\},i)-v(\{i\})\big)>0$. Hence, for player $j$, $\phi(G^i,j) > \phi(G^j,j)$, satisfying S-STAY axiom.
	
	\noindent\textbf{(PART)} For each arriving player $i$, if $i$ is a contributional player, i.e., $\MC(N_{\pi \mid i},i)>0$, $i$ is not a dummy player and will be included in $S_i$. Then, $i$ immediately gets $\frac{1}{|S_i|}\big(\MC(N_{\pi \mid i}\setminus\{i\},i)-v(\{i\}) > 0$, satisfying the PART axiom.
	
	\noindent\textbf{(OD)} Consider an OCG $G=(N,v,\pi)$. For every prefix sub-game $G^i$, the IR-\NDMES rule eliminates all the dummy players in $G^i$, just like the \NDMES rule, aligning directly with the definition of Online-Dummy.
	
	\noindent \textbf{(EA)} Consider two OCGs $G_1=(N,v,\pi_1)$ and $G_2=(N,v,\pi_2)$ with two different orders $\pi_1$ and $\pi_2$ as follows
	\begin{equation}
	\begin{aligned}
		\pi_1: (1, 2, \ldots, i - 1, \mathbf{i} , i + 1, \ldots, j - 1, j, j + 1, \ldots, n)\\
		\pi_2: (1, 2, \ldots, i - 1, i + 1, \ldots, j - 1, j, \mathbf{i}, j + 1, \ldots, n) \\
	\end{aligned}
	\end{equation}
	
	For any player $i\in N$, we discuss the satisfication of the EA axiom in the following four cases.
	
	\noindent \textbf{Case 1.} 
	$\MC(G_1^i,N_{\pi_1 \mid i}\setminus\{i\},i) = \MC(G_2^i,N_{\pi_2 \mid i}\setminus\{i\},i)=0$. The shared value after player $j+1$’s arrival remains the same for these two orders. The only difference is that player $i$ might gain some shared value when $i$ is not a dummy player in sub-games from $G^{i+1}$ to $G^{j}$. Therefore, the value shared by player $i$ in $\pi_1$ will always be at least as much as in $\pi_2$
	
	\noindent \textbf{Case 2.} $\MC(G_1^i,N_{\pi_1 \mid i}\setminus\{i\},i) > 0, \MC(G_2^i,N_{\pi_2 \mid i}\setminus\{i\},i)=0$. For $\pi_1$, $i$ will never be a dummy player in all sub-games after her arrival, thus $i$ shares all the marginal value from $i$ to $n$. However, in $\pi_2$, player $i$ only shares marginal value from player $j+1$ to player $n$.
	
	\noindent \textbf{Case 3.} $\MC(G_1^i,N_{\pi_1 \mid i}\setminus\{i\},i) = 0, \MC(G_2^i,N_{\pi_2 \mid i}\setminus\{i\},i)>0$. In thise case, we discuss cases concerning whether $i$ is dummy or not. The first situtation is that $i$ keeps to be dummy (No shared value from $i$ to $j-1$) until $j$'s arrival (This is because $i$ has some positive marginal value in $\pi_2$). 
	\begin{equation}
	\phi(G_1,i) = v(\{i\}) + \sum_{k=j}^n \frac{1}{|S^1_k|}\Big(\MC(G_1^i,N_{\pi_1 \mid k}\setminus\{k\},k) -v(\{k\})\Big).
	\end{equation}
	\begin{equation}
	\phi(G_2,i) = v(\{i\}) + \sum_{k=j}^n \frac{1}{|S^2_k|}\Big(\MC(G_2^i,N_{\pi_2 \mid k}\setminus\{k\},k) -v(\{k\})\Big).
	\end{equation}
	Let $S_j^1$ be the set of players who share the value from $j$'s participation in $\pi_1$. $S_j^1=S_i^2$ because they both represent the set of non-dummy players in players $\{1,2,\ldots, j\}$. Another observation is that starting from player $j+1$'s participation, the value shared by player $i$ in $\pi_1$ and $\pi_2$ will be the same. Hence,
	\begin{equation}
	\begin{aligned}
		& \phi(G_1, i) - \phi(G_2, i) \\
		=& \frac{1}{|S_j^1|}\Big( \big(\MC(G_1^j, N_{\pi_1\mid j}\setminus\{j\},j) - v(\{j\})\big) - \big(\MC(G_2^i,N_{\pi_2 \mid i}\setminus\{i\}, i) -v(\{i\})\big)
		\Big).
	\end{aligned}
	\end{equation}
	Note that player $i$ creates no new marginal valution with $N_{\pi_1 \mid i}$ and she is a dummy player for all sub-games before $j$'s arrival in this situation, thus we have $v(\{i\})=0$ and
	$v(N_{\pi_1 \mid j}\setminus\{j\}) = v(N_{\pi_1 \mid j}\setminus\{i,j\})$. Therefore, 
	\begin{equation}
	\begin{aligned}
		&\phi(G_1,i) - \phi(G_2,i) \\
		=&\frac{1}{|S_j^1|} \Big( v(N_{\pi_1 \mid j}\setminus\{i\}) - v(N_{\pi_1 \mid j}\setminus\{j\}) - v(\{j\}) \Big)\\
		\geq & \frac{1}{|S_j^1|} \Big( v(N_{\pi_1 \mid j}\setminus\{i,j\}) + v(j) - v(N_{\pi_1 \mid j}\setminus \{i,j\}) - v(j) \Big)\\
		=& 0.
	\end{aligned}
	\end{equation}
	The other situation is that player $i$ is not dummy after some player $q$'s arrival where $i \prec_{\pi_1} q \prec_{\pi_1} j$, in this case, $\phi(G_1,i)$ can be written as
	\begin{equation}
	\begin{aligned}
		\phi(G_1,i) = v(\{i\}) + \sum_{k =q}^n \frac{1}{|S^1_k|}\Big(\MC(G_1^k, N_{\pi_1 \mid k}\setminus\{k\}, k) - v(\{k\})\Big).
	\end{aligned}
	\end{equation}
	Note that $v(\{i\}) = 0$ because of the super-additivity. Also, denote $\sum_{k = j + 1}^n \frac{1}{|S^1_k|}\big(\MC(G_1^k,N_{\pi_1\mid k}\setminus\{k\}, k) - v(\{k\})\big)$ by $\Delta$ for simplicity. Then, we have 
	\begin{equation}
	\begin{aligned}
		&\phi(G_1,i) = \sum_{k=q}^{j} \frac{1}{|S^1_k|}\Big(\MC(G_1^k,N_{\pi_1 \mid k}\setminus\{k\}, k) - v(\{k\})\Big) + \Delta \\
		&\geq \frac{1}{|S^1_j|} \Big( v(N_{\pi_1\mid j} - v(N_{\pi_1 \mid q}\setminus\{q\})  -\sum_{k=q}^j v(k) \Big) + \Delta.
	\end{aligned}
	\end{equation}
	For the game $G_2$, $\phi(G_2,i)$ can be written as
	\begin{equation}
	\begin{aligned}
		\phi(G_2,i) =& v(\{i\}) + \frac{1}{|S_i^2|} \Big(\MC(G_2^i, N_{\pi_2 \mid i}\setminus\{i\}, i) - v(i)\Big) + \Delta =\frac{1}{|S_i^2|} \MC(G_2^i, N_{\pi_2\mid i}\setminus\{i\}, i) + \Delta.
	\end{aligned}
	\end{equation}
	where $\MC(G_2^i, N_{\pi_2 \mid i}\setminus \{i\}, i) = v(N_{\pi_1 \mid j}) - v(N_{\pi_1 \mid j}\setminus \{i\})$. Further, 
	\begin{equation}
	\begin{aligned}
		& \phi(G_1,i) - \phi(G_2, i) =\frac{1}{|S_j^1|} \Big(
		v(N_{\pi_1 \mid j})-v(N_{\pi_1 \mid q}\setminus\{q\}) -\sum_{k=q}^j v(\{k\})\Big) -\frac{1}{|S_i^2|} \Big( v(N_{\pi_1 \mid j}) - v(N_{\pi_1 \mid j}\setminus \{i\}) \Big).
	\end{aligned}
	\end{equation}
	For every player $k$ who arrives after player $j$, we have $N_{\pi_1 \mid k}$ and $N_{\pi_2 \mid k}$. That is $|S_j^1| = |S_i^2|$, 
	\begin{equation}
	\begin{aligned}
		\phi(G_1,i) - \phi(G_2,i) =&\frac{1}{|S_j^1|} \Big( v(N_{\pi_1 \mid j}\setminus \{i\}) -v(N_{\pi_1 \mid q}\setminus\{q\}) -\sum_{k=q}^j v(\{k\}) \Big) \\
		\geq &  \frac{1}{|S_j^1|} \Big( v(N_{\pi_1 \mid q}\setminus \{i,q\}) + \sum_{k=q}^j v(k) - v(N_{\pi_1 \mid q}\setminus\{q\}) -\sum_{k=q}^j v(\{k\})  \Big) \\
		= &  \frac{1}{|S_j^1|} \Big( v(N_{\pi_1 \mid q}\setminus \{i,q\}) - v(N_{\pi_1 \mid q}\setminus \{i,q\}) \Big) = 0. \\
	\end{aligned}
	\end{equation}
	The penultimate inequality follows from superadditivity and the last step is because $i$ is a dummy player of prefix $N_{\pi_1 \mid q}=\{1,2,\ldots, i,\ldots,q\}$.
	
	\noindent \textbf{Case 4.} $\MC(G_1^i,N_{\pi_1 \mid i}\setminus\{i\},i) = 0, \MC(G_2^i,N_{\pi_2 \mid i}\setminus\{i\},i)>0$
	For $G_1$, $\phi(G_1,i)$ can be represented as
	\begin{equation}
	\begin{aligned}
		\phi(G_1,i)=v(\{i\}) +& \sum_{k=i}^j \frac{1}{|S_k^1|}\Big(\MC(G_1^k, N_{\pi_1 \mid k}\setminus\{k\}, k) - v(\{k\})\Big) \\
		&+ \sum_{k=j+1}^n \frac{1}{|S_k^1|}\Big(\MC(G_1^k,N_{\pi_1 \mid k}\setminus\{k\}, k) - v(\{k\})\Big).
	\end{aligned}
	\end{equation}
	For $G_2$, $\phi(G_2,i)$ can be represented as
	\begin{equation}
	\begin{aligned}
		\phi(G_2,i)=v(\{i\}) +& \frac{1}{|S_i^2|}\Big(\MC(G_2^i, N_{\pi_2 \mid i}\setminus\{i\}, i) - v(\{i\})\Big) \\
		&+ \sum_{k=j+1}^n \frac{1}{|S_k^2|}\Big(\MC(G_2^k, N_{\pi_2 \mid k}\setminus\{k\}, k) - v(\{k\})\Big).
	\end{aligned}
	\end{equation}
	Still, we have $S_j^1 = S_i^2$, and from the arrival of player $j+1$ to the last player, the value shared by player $i$ in $G_1$ and $G_2$ remains the same. Thus, 
	\begin{equation}
	\begin{aligned}
		&\phi(G_1,i) - \phi(G_2,i) \\
		=& \sum_{k=i}^{j}\frac{1}{|S_k^1|}\Big(\MC(G_1^k, N_{\pi_1 \mid k}\setminus\{k\}, k) - v(\{k\})\Big) - \frac{1}{|S_i^2|}\Big(\MC(G_2^i, N_{\pi_2 \mid i}\setminus \{i\}, i) - v(\{i\})\Big) \\
		\geq& \frac{1}{|S_j^1|}\Big(\sum_{k=i}^j \big(\MC(G_1^k, N_{\pi_1 \mid k}\setminus\{k\}, k) - v(\{k\})\big) - \MC(G_2^i, N_{\pi_2 \mid i}\setminus\{i\}, i) + v(\{i\}) \Big).
	\end{aligned} 
	\end{equation}
	To show $\phi(G_1,i) - \phi(G_2,i)\geq 0$, we next show $\sum_{k=i}^j \big(\MC(G_1^k, N_{\pi_1 \mid k}\setminus\{k\}, k) - v(\{k\})\big) - \MC(G_2^i, N_{\pi_2 \mid i}\setminus\{i\}, i) + v(\{i\}) \geq 0$. We first rewrite the equation as
	\begin{equation}
	v(N_{\pi_1 \mid j}) - v(N_{\pi_1 \mid i}\setminus\{i\}) - \sum_{k=i+1}^j v(\{k\}) - \MC(G_2^i,N_{\pi_1 \mid i}\setminus\{i\}, i) .
	\end{equation}
	Notably, $N_{\pi_2 \mid i} \setminus\{i\}=\{1,2,\ldots, i-1,i+1,\ldots, j\}$ and the valuation function $v$ is superadditive, then we have,
	\begin{equation}
	\begin{split}
		&v(N_{\pi_1 \mid j}) - v(N_{\pi_1 \mid i}\setminus\{i\}) - \sum_{k=i+1}^j v(\{k\}) - \MC(G_2^i, N_{\pi_2 \mid i}\setminus\{i\}, i)  \\
		&= v(N_{\pi_1 \mid j}\setminus\{i\}) - v(N_{\pi_1 \mid i}\setminus\{i\}) - \sum_{k=i+1}^j v(\{k\}) \\
		&\geq v(N_{\pi_1 \mid i}\setminus\{i\}) + \sum_{k=i+1}^j v(\{k\}) -v(N_{\pi_1 \mid i}\setminus\{i\}) - \sum_{k=i+1}^j v(\{k\})\\
		&= 0.
	\end{split}
	\end{equation}
	From Case 1 to Case 4, for any player $i$, $i$ has no incentive to delay her arrival under the IR-\NDMES sharing scheme. Thus, the IR-\NDMES rule satisfies the EA axiom.
	
	Finally, we show that \textbf{IR-\DULMES} satisfies S-STAY, PART, OD, and EA.
	Recall IR-\DULMES rule runs as follows. For an OCG $G=(N,v,\pi)$, we first decompose $G$ into $D(G)$ by \Cref{alg:GM_Decomposition}. For each SOCG component $(c,\bar{G})$, we run the IR refinement paradigm, that is, first assign $v(\{i\})$ for each arriving player $i$; then compute $S_i$ in $\bar{G}$ and equally share the value $\MC(\bar{G}^i, N_{\pi \mid i})-v(\{i\})$. 
	
	\noindent\textbf{(S-STAY)} It is obviously that IR-\DULMES rule satisfies the STAY axiom as each player shared value is the linear combination (positive coefficient) of the outputs of SOCGs, in which the shared value is non-decreasing. Next we show for each component SOCG $\bar{G}$, the rule satisfies S-STAY. Consider an SOCG $\bar{G}=(N,\bar{v},\pi)$. For any prefix sub-game $\bar{G}^i$, since we assume that $\MC(\bar{G}^i, N_{\pi \mid i}\setminus\{i\},i)-\bar{v}(\{i\})>0$, then for any player $j$ who satisfies $\bar{v}(N_{\pi \mid i}) > \bar{v}(N_{\pi \mid i}\setminus \{j\})$, $j$ must be included in $S_i$ in $\bar{G}$ (The same proof procedures of the S-STAY axiom for \ULMES rule in \Cref{ULMES_properties}). Then $j$ must get a positve fraction $\frac{1}{|S_i|}$ of $(\MC(\bar{G}^i, N_{\pi \mid i}\setminus\{i\},i)-\bar{v}(\{i\}))>0$, satisfying S-STAY.
	
	\noindent \textbf{(PART)} Since IR-\DULMES rule outputs the linear combination of IR-\ULMES rule in each decomposed component SOCG $\bar{G}$, we show that the PART axiom is satisfied in each $\bar{G}$. Consider an SOCG $\bar{G}=(N,\bar{v},\pi)$, for any prefix sub-game $\bar{G}^i$ with arriving player $i$, if $i$ is a contributional player, $\bar{v}(N_{\pi \mid i}) > \bar{v}(N_{\pi \mid i}\setminus\{i\})$, then $i$ is kept in $S_i$ as $\bar{v}(N_{\pi \mid i}) - \bar{v}(N_{\pi \mid i}\setminus\{i\}) > 0$. Futher, we have $\phi(\bar{G},i)=\bar{v}(\{i\}) + \frac{1}{|S_i|}\big(\MC(\bar{G}^i, N_{\pi \mid i}\setminus\{i\},i)-\bar{v}(\{i\})\big)$. If $\bar{v}(\{i\})>0$, then $\phi(\bar{G},i)>0$ as $\MC(\bar{G}^i, N_{\pi \mid i}\setminus\{i\},i)-\bar{v}(\{i\})\geq 0$. When $\bar{v}(\{i\})=0$, $\phi(\bar{G},i)=\frac{1}{|S_i|}\big(\MC(\bar{G}^i, N_{\pi \mid i}\setminus\{i\},i)-\bar{v}(\{i\})\big) > 0$. Therefore, IR-\DULMES rule satisfies PART.
	
	\noindent \textbf{(OD)} The proof of satisfication of the Online-Dummy axiom directly holds as the IR refinement does not affect the selection of $S_i$ for any OCG $G$. Thus the proof of the OD axiom directly follows the proof in \Cref{DULMES_properties}.
	
	\noindent \textbf{(EA)} To show IR-\DULMES rule satisfies EA, it is sufficient to show that IR-\ULMES rule satisfies EA in each decomposed component SOCG $\bar{G}$. The reason is that if player $i$ has no incentive to delay her arrival in every decomposed SOCG $\bar{G}$, then $i$ can not change the shared value by delaying her arrival because the GM decomposition does not depend on the arrival order $\pi$. Now we focus on the EA property of IR-\ULMES rule in SOCG $\bar{G}=(N,\bar{v},\pi)$. Notice that in SOCG with superadditive valuation function, there could be at most one player with singleton value $1$. We discuss the following two cases: (1). There is no player with singleton value $1$ in $\bar{G}$. In this case, IR-\ULMES rule degenerates to \ULMES rule. This means IR-\ULMES rule satisfies EA as well as we have proved that \ULMES rule satisfies EA in SOCGs in \Cref{ULMES_EA_SIMPLE}. (2). The other situation is that there exists some player $q$ satisfying $\bar{v}(\{q\})=1$. In this case, as we know IR-\ULMES rule satisfies IR, it means that in such an SOCG with $\bar{v}(\{q\})=1$, the value $1$ is wholely allocated to player $q$. Then, firstly $q$ has no incentive to delay her arrival as it does not change the outcome. Secondly, for any other player $i\neq q$, $i$ cannot delay her arrival to share some part of the value $1$ because IR property guarantees IR-\ULMES rule distributes the value $1$ to player $q$. This completes the proof of the EA satisfication of IR-\ULMES rule. Therefore, we have IR-\DULMES rule satisfies the EA axiom.
\end{proof}

\end{document}